\documentclass[11pt, a4paper, logo, copyright]{gensyn} 

\usepackage{xspace}             
\usepackage{multirow}            
\usepackage{rotating}            
\usepackage{tikz}                
\usepackage{float}               
\usepackage[ruled,vlined]{algorithm2e}
\usepackage{soul}                

\usepackage[colorinlistoftodos,textsize=tiny,textwidth=35pt]{todonotes}

\newcommand{\Comments}{1}

\newcommand{\mynote}[2]{\ifnum\Comments=1\textcolor{#1}{#2}\fi}
\newcommand{\mytodo}[2]{\ifnum\Comments=1%
    \todo[linecolor=#1!80!black,backgroundcolor=#1,bordercolor=#1!80!black]{#2}%
\fi}

\newcommand{\newcommenter}[3]{%
    \expandafter\newcommand\csname #1\endcsname[1]{%
        \mynote{#3!50!black}{[#2: ##1]}%
    }%
    \expandafter\newcommand\csname #1todo\endcsname[1]{%
        \mytodo{#3!20!white}{#2: ##1}%
    }%
}

\ifnum\Comments=1
    \paperwidth=\dimexpr\paperwidth+50pt\relax
    \oddsidemargin=\dimexpr\oddsidemargin+25pt\relax
    \evensidemargin=\dimexpr\evensidemargin+25pt\relax
    \marginparwidth=\dimexpr\marginparwidth+25pt\relax
\fi

\usepackage{array}
\newcolumntype{P}[1]{>{\raggedright\arraybackslash}p{#1}}
\newcolumntype{M}{>{\centering\arraybackslash\footnotesize}m{.78cm}}
\newcolumntype{S}{>{\centering\arraybackslash\tiny}m{2cm}}

\newtcbtheorem[auto counter,number within=section]{obs}%
    {Observation}{fonttitle=\bfseries\upshape, fontupper=\slshape,
        arc=0mm, colback=gensynlighterpink, colframe=gensynburgundy!60!white}%
    {theorem}

\newtcbtheorem[auto counter,number within=section]{ins}%
    {Insight}{fonttitle=\bfseries\upshape, fontupper=\slshape,
        arc=0mm, colback=gensynpink!40!white, colframe=gensynburgundy}%
    {theorem}

\usepackage{bm}

\newtheorem{theorem}{Theorem}
\newtheorem{lemma}{Lemma}
\newtheorem{proposition}{Proposition}
\newtheorem{corollary}{Corollary}
\theoremstyle{definition}
\newtheorem{definition}{Definition}

\newcommand{\bz}{\bm{z}}
\newcommand{\bs}{\bm{s}}
\newcommand{\bx}{\bm{x}}

\newcommand{\Skt}{S_{K_{T-t}}}
\newcommand{\Sktj}{S_{K_{T-t}, j}}

\newcommand{\Ex}{\mathbb{E}}

\newcommenter{gab}{GA}{magenta}
\newcommenter{saf}{SH}{teal}

\title{Evidence Markets}

\author[2]{Safwan Hossain}
\author[1]{Gabriel Andrade}
\author[1,3]{Chengqi Zang}
\author[2]{Yiling Chen}

\affil[1]{Gensyn AI}
\affil[2]{Harvard University}
\affil[3]{University of Tokyo}

\correspondingauthor{Safwan Hossain was a research intern at GensynAI during the course of this work. Email: shossain@g.harvard.edu}

\reportnumber{}     

\begin{abstract}
Modern prediction markets face two limitations that restrict their applicability in a range of settings:~(i)~they reveal what the crowd believes but not the evidence or reasoning behind those beliefs, and~(ii)~they require an event with an external ground truth that resolves at a known future date. We address these twin challenges by introducing evidence markets, a generalization of prediction markets that incentivizes the submission of evidence alongside beliefs and can be endogenously resolved using the crowd-sourced evidence if external resolution is not possible. At its core, the market uses a logarithmic market scoring rule whose liquidity parameter changes dynamically with the accumulated evidence quality. We prove that platform loss is bounded, evidence is rewarded proportional to the current market uncertainty, and can be equivalently implemented through an automated market maker. In the case where the marker resolves endogenously based on submitted evidence, we characterize how withholding evidence shifts a trader's belief about resolution and use it to prove truthful belief and evidence reporting is a always an $\varepsilon$-dominant strategy incentive compatible (DSIC) strategy. To address operational considerations, we propose evidence verification via an LLM-as-a-Judge framework with staking and give an asynchronous execution algorithm that is not bottle-necked by verification. Throughout the work, we use LLM evaluations---determining which model is best for a given task---as a salient and representative running example for our proposed market.

\end{abstract}

\begin{document}

\maketitle

\section{Introduction}
Prediction markets have experienced a meteoric rise over the past few years, moving from an academic curiosity to mainstream infrastructure for forecasting. Platforms such as Polymarket and Kalshi now host markets on elections, geopolitics, scientific milestones, and economic indicators, drawing substantial volume and attention from policymakers, journalists, and researchers. Their appeal rests on a clean theoretical foundation: under standard assumptions, the market price aggregates participants' dispersed beliefs about an uncertain event into a single, continuously updating probability estimate---one that often outperforms expert forecasts and polls. Yet despite their growing prominence, prediction markets in their current form face two fundamental limitations that constrain both the depth and the breadth of what they can offer.

The first limitation is interpretive. Prices on a prediction market summarize the weighted aggregate of participants' beliefs into a single probability, but the evidence and reasoning underlying those beliefs remain privately held by each trader. In short, the current iteration of prediction markets tell us \emph{what} the crowd believes, but not \emph{why}. Understanding this reasoning is crucial to have confidence in the forecast and justify any decisions induced by it. Decision-makers acting just on price signals are forced to infer this reasoning themselves, eroding much of the efficiency that motivated such markets in the first place. In contrast, a mechanism that incentivizes traders to disclose the evidence behind their beliefs would let the market summarize precisely this information. Future traders could then update their beliefs in light of the explanations already submitted, yielding a more informed aggregate and a richer information product overall.

The second limitation is more fundamental and pertains to which events can even be considered for prediction markets. Currently, they can only consider events which are externally resolved by time --- i.e. events whose outcome will, at some specific future date, be publicly observable and verifiable independent of the market itself. This is the natural setting for elections, sporting events, asset prices, and other phenomena that nature or institutions will resolve on a known timeline. It fails, however, for the much larger class of events whose answers depend on judgment, evidence, or interpretation rather than on the passage of time: questions about scientific replication, policy effectiveness, the quality of competing artifacts, and so on. By incentivizing evidence submission, we may be able to endogenously resolve such questions based on the submitted evidence.

The evaluation of large language models (LLMs) stands out as both a pressing and structurally illustrative example of these concerns. LLMs are released at a rapid pace that has outstripped the capacity of static benchmarks to keep up. Further, benchmarks are costly to produce, saturate quickly, leak into training data, and capture narrow slices of capability that may be relevant only to a few. It is possible to capture such context-specific evaluation by framing it as an event: \emph{Which model is better at task X?} However, there is no external, time-bound, and clear resolution of such an event; indeed resolution is itself an exercise in building an evaluation dataset (evidence aggregation). This makes LLM evaluation a natural setting for a market mechanism that crowd-sources an evidence dataset and endogenously resolves through that evidence. Ideally, participants can not only bet on which model wins, but also submit supporting evidence in the form of evaluation questions which will be used to determine the winner. 

\paragraph{Our Contribution:} We propose a novel \emph{evidence market} that addresses the two limitations above in tandem. This mechanism (1) incentivizes both belief reporting and evidence submission, and (2) can be resolved externally by realized events or endogenously through the submitted evidence. Traders are free to report either belief or evidence (or both), thereby generalizing existing prediction markets as a special case of our model. Section~\ref{section:model_evidence} outlines the formal model used in this paper and outlines our proposed market at a high-level. We observe that in endogenous resolution, traders affecting market resolution through evidence submission means there is a multitude of consistent beliefs they can hold about resolution depending on what evidence they submit. Section~\ref{sec:endogenous_resolution} formalizes this epistemic question and bounds how much selectively submitting private evidence can affect a trader's resolution belief. Section~\ref{section:payoff} then discusses the overall payoff mechanism for this market under both types of market resolution by leveraging the formalisms of logarithmic market scoring rules (LMSRs) and automated market makers (AMMs). At a high-level, we propose a dynamic liquidity parameter that \emph{decreases} as the evidence quality in the market \emph{increases} and show that this achieves $\varepsilon$-dominant strategy incentive compatibility (DSIC) for endogenous resolution and strict DSIC for exogenous resolution, with respect to both truthful belief and full evidence submission. In Section~\ref{section:practical} we comment on the evidence verification problem and highlight the unsuitability of standard approaches like peer-prediction for this problem. Building on past related works, we propose an LLM-as-a-Judge framework and augment it with staking where relevant to address this challenge. We also give a practical algorithm for executing orders that is not blocked by evidence verification. We conclude in Section~\ref{sec:discussion} with a discussion on the broader implications of our work and open questions it raises.

\section{Related Work}\label{sec:related_works}

\paragraph{Prediction markets and market scoring rules:} A long line of empirical work has established that prediction markets aggregate dispersed information effectively and often outperform alternative forecasting methods~\citep{wolfers2004prediction, manski2006interpreting, arrow2008promise}. The theoretical foundation for our mechanism rests on the seminal work of Hanson which proposed logarithmic market scoring rule, an incentive-compatible mechanism for truthful belief submission~\citep{hanson2003combinatorial, hanson2007logarithmic}. Building on this, \citet{chen2007utility} formalizes the connection between such market scoring rules being duals of automated market makers (AMMs) with a given cost function, allowing versatility in implementation. Our construction inherits the LMSR cost-function structure but modifies the liquidity parameter to be a decreasing function of cumulative evidence quality, coupling price-formation directly to the evidence record.

\paragraph{Markets for unverifiable outcomes and explanations:} The self-resolving prediction market of ~\citet{srinivasan2023self} also looks to address one of the limitations we point out: existing markets require an externally verifiable ground truth to resolve. Their mechanism terminates randomly with probability $\alpha$ after each report and pays earlier participants based on the final reporter's prediction via negative cross-entropy, achieving truthful reporting as a perfect Bayesian equilibrium without ever observing an outcome. Our work shares the motivation but takes a structurally different approach. Rather than treating the final belief report as a proxy for ground truth, we open up a second axis, submission of evidence that can discriminate between alternatives, and resolve endogenously using the crowd-sourced evidence. This shift converts the resolution signal from a belief aggregate into an evidence aggregate, and thereby grounds resolution away from self-referential beliefs and toward externally inspectable artifacts that can be of independent interest. 

The other limitation mentioned -- that markets reveal beliefs but not the reasoning behind them -- is taken up by \citet{srinivasan2025tellmewhy}, which proposes directly and only eliciting rationales for events as opposed to beliefs. Their model captures the insight that traders' beliefs are drawn from overlapping information sources, so belief reports alone fail to reveal what is shared versus what is new; rationales uncover this distinction and enable more efficient aggregation. Our mechanism pursues the same goal -- uncovering the \emph{why} behind the price -- but through a different channel: rather than only eliciting verbal rationales, we make evidence/explanations a market commodity that sits alongside beliefs. Importantly, this allows traders who may not be able to articulate their rationales well or have no novel information, to still trade in the market through just beliefs as in standard prediction markets. Indeed, a trader's information/explanation may already exist in the market, but they can draw different conclusions from it and thus arrive at a distinct belief. Our work allows this flexibility not found in \citet{srinivasan2025tellmewhy}. 

On the whole, these two papers independently address the resolution problem and the explanation problem that motivated this work. To our knowledge, no prior work addresses both simultaneously. This is our core contribution: a single mechanism that can reveal the motivating basis of the price as well as endogenously resolve the market through such evidence. Crucially, our design is also \emph{flexible} in a way that neither prior mechanism is; traders may submit beliefs without evidence (recovering standard prediction markets trading), submit evidence without taking a position (acting as pure contributors to the resolution record), or do both. This flexibility lowers the participation barrier for traders whose comparative advantage lies on only one side or have different risk profiles. 

\paragraph{Peer prediction.} A separate literature elicits truthful reports without ground truth by paying agents based on the reports of peers~\citep{miller2005eliciting, prelec2004bayesian, dasgupta2013crowdsourced, kong2019information, schoenebeck2020learning}. Multi-task extensions including Correlated Agreement~\citep{shnayder2016informed} and Determinant-based Mutual Information~\citep{kong2020dominantly} strengthen these results under weaker assumptions. We engage directly with this literature as a possible way to verify evidence in the absence of ground truth. Unfortunately, when agents can participate as both traders and peer verifiers of evidence, peer predictions guarantees do not map cleanly. This motivates our optimistic verification with staked disputes, in which a frontier LLM serves as default verifier and agents dispute by staking, with adjudication having access to the whole history. This draws on the now well-established asymmetry between generation and verification of reasoning: process reward models~\citep{lightman2023verify}, generative reward models~\citep{zhang2024genrm}, and verifier-augmented thinking models~\citep{khalifa2025thinkprm} all exploit the fact that checking a candidate solution is structurally easier than producing one. Lastly, the recent WOMAC mechanism~\citep{srinivasan2025womac} also explores adjudication structures for prediction competitions and is conceptually adjacent, though aimed at a different (competition-style) setting.

\paragraph{LLM evaluation.} The most pressing application for our framework is the comparative evaluation of LLMs, where static benchmarks saturate quickly, suffer from contamination, and capture narrow capability slices~\citep{liang2022helm, srivastava2023beyond}. Crowd-sourced pairwise-preference platforms such as Chatbot Arena~\citep{chiang2024chatbot} and judged benchmarks such as MT-Bench~\citep{zheng2023judging} partially address these issues but lack incentive structure for participants and do not produce a market-style continuously-updated estimate. Our mechanism is designed for precisely this regime: the evaluation question/event can be highly targeted and the crowd-sourced evidence set can be used for resolution, alongside more detailed model evaluation and other downstream tasks.

\section{Model}\label{section:model_evidence}
\paragraph{Agent Evidence and Beliefs: }Consider an event with $n$ possible alternatives or outcomes. Participants, agents, or \emph{traders} arrive to the market sequentially, and each trader $t$'s belief about the resolution of this event is denoted by a vector $q_t \in \Delta(n-1)$, the $n-1$ dimensional simplex. Each trader also holds some evidence set $E_t = (e_{t1}, \dots, e_{t\ell})$ where each $e_{tj}$ can be see as some atomic unit of evidence/explanations\footnote{We consider traders as innately possessing their evidence pair. While another natural model could consider some effort/cost in evidence acquisition, this is not our setting.}. In the specific case of LLM evaluation, this atomic evidence can be seen as an evaluation question and answer pair: $e_{tj} = (\text{question}_{tj}, \text{answer}_{tj})$. In general, we consider traders as being able to selectively choose or subsample the evidence they submit to the market, denoted $E'_t$. Note that this does not include them manipulating or tampering the atomic units $e_{tj}$. Section~\ref{section:practical} broadens this perspective and considers the evidence verification problem for (1) exogenous resolution and (2) endogenous resolution for LLM evaluations. For endogenously resolving based on evidence, we make the following assumption to connect evidence to outcomes:

\begin{definition}
    Each alternative or outcome $i \in [n]$ has a binary outcome $X_{ie}$ with respect to an atomic evidence $e$. $X_{ie} = 1$ denotes the evidence supporting or corroborating event alternative $i$, and $X_{ie} = 0$ denotes the evidence refuting this alternative.
\end{definition}

While the definition above is simplistic, it allows theoretical tractability and is especially pertinent in the LLM evaluation setting where $e$ is a question, answer pair; $X_{ie}$ simply maps to whether model $i$ answered that question correctly or not. Lastly, note that the total evidence set a trader $t$ holds can be empty --- $E_t = \emptyset$ --- which allows us to capture existing prediction markets as a special case of our model. We use $E_{t}^{total}$ to denote the total amount of evidence in the market at some time $t$, and this is publicly available. Key to our model, is the notion of \emph{evidence quality} which is applied to arbitrary sets of evidence. More formally:

\begin{definition}~\label{def:evidence_quality}
    For a set of evidence $E = (e_1, \dots, e_\ell)$, let $r(E)$ denote the quality of that evidence set. Further, let $R_t = r(E_t^{total})$ denote the quality of the cumulative evidence set that exists in the market at time $t$. We assume that $r(\cdot)$ is non-negative and monotone increasing, i.e.~$r(E) \geq 0$ and for $E_1 \subseteq E_2$, $r(E_1) \leq r(E_2)$
\end{definition}
The quality function $r$ is clearly contextual and our results do not make any assumptions beyond the stated non-negativity and monotonicity, which are fairly natural for such settings. The function $r$ also encapsulates any verification or filtering that is pertinent for a given market and could depend on $E_t^{total}$: we may, for instance, only want relevant questions that are not semantically similar to evidence that is already in the market. We comment more on the specific instantiations of $r$ and it's nuances as it relates to both endogenous and exogenous resolution in Section~\ref{section:practical}.

\paragraph{Market Desiderata: }Our goal is to propose a market for the submission of beliefs and/or evidence which can aggregate these diverse positions and payout traders in a way that incentivizes truthful behaviour. More specifically, we wish to build a market that satisfies the following desiderata:

\begin{definition}\label{def:axioms}
    Evidence markets should allow traders to submit beliefs and/or evidence and satisfy:
    \begin{itemize}[itemsep=0.0ex,leftmargin=4ex]
         \item Axiom 1: Can be resolved exogenously or endogenously through the collected evidence.
        \item Axiom 2: It is optimal for each trader $t$ to submit their full evidence set $E_{t}$.
        \item Axiom 3: It is optimal for each trader $t$ to report/act with respect to their true belief $q_t$.
        \item Axiom 4: The numerical market belief at any time should be clearly interpretable.
        \item Axiom 5: The order in which an agent submits her evidence/beliefs/trades should be irrelevant. 
\end{itemize}
\end{definition}

\begin{figure}
    \centering
    \includegraphics[width=0.8\linewidth]{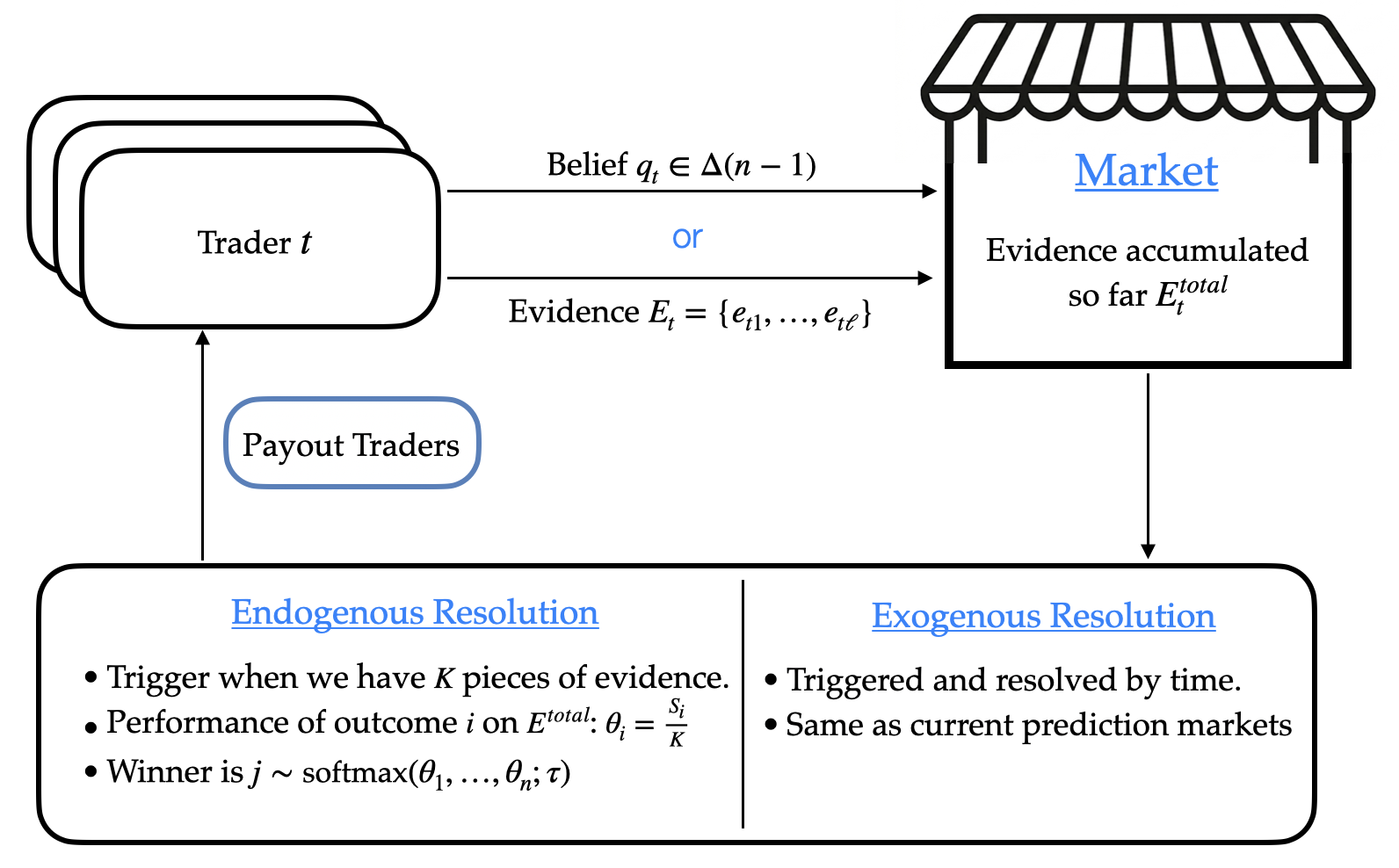}
    \caption{High-Level flow of the proposed evidence market. Section~\ref{sec:endogenous_resolution} studies the connection between belief and evidence under endogenous resolution in detail, and Section~\ref{section:payoff} outlines our specific payout mechanism that build upon and generalizes LMSRs.}
    \label{fig:market_flow}
\end{figure}

\paragraph{Market Structure:} We now propose our high-level market structure, visualized in Figure~\ref{fig:market_flow}. It consists of traders arriving sequentially and submitting beliefs and evidence to the market. Beliefs can be submitted directly or indirectly through buying shares of resolution (see Section~\ref{sec:amm_payoff}). If the market is externally resolved, the resolution is triggered at the associated time $T$, and traders receive payoff. In the case of endogenous resolution, we propose resolution be triggered when $K$ pieces of evidence has accumulated, with $K$ being a event-creator chosen hyperparameter. Resolution then follows as such:
\begin{definition}\label{defn:market_resolution}
    Endogenous resolution is triggered when $K = |E^{total}|$ pieces of evidence has accumulated. For each resolution alternative $i$, let $\theta_{i} = \tfrac{1}{K}\sum_{e} X_{ie}$ denote the fraction of $E^{total}$ supporting $i$. Then the market resolves by sampling alternative $i \sim \text{softmax}(\theta_{1}, \dots, \theta_{n}; \tau)$ where $\tau$ is the temperature of the softmax.
\end{definition}

The remainder of the paper is devoted to precisely specifying each part of this market and validating that they satisfy the market desiderata outlined. We first consider the epistemic question that arises in endogenous resolution: how does selective evidence submission affect a trader's belief about resolution and what is the sensitivity of this relationship. Later, we outline the market's payoff mechanism, a modified version of an LMSR that can be equivalently implemented through an automated market maker (AMM).

\section{Beliefs under Endogenous Resolution}\label{sec:endogenous_resolution}
In traditional prediction markets, individuals cannot affect outcome resolution. As such, their belief $q$ about market resolution is an exogenous quantity and coincides with their belief about how the outcome will be realized. This structurally changes in endogenous resolution where evidence submitted by the traders is used to resolve the market. As such, a distinction emerges here between a trader's belief about the underlying event, and their belief about market resolution, which in our proposal is sampled from the softmax over the terminal scores of each alternative. It is a trader's belief about this latter that we denote by $q$. To be precise about this, first observe that we can decompose the score of alternative $j$ from the perspective of trader $t$ as a function of the current evidence in the market, the evidence submitted by trader $t$, and future evidence. Specifically, let $k(t) = |E^t_{total}|$ denote the total pieces of evidence collected at time $t$, with $k_j(t)= \sum^{k(t)}X_{je}$ counting how much of the current evidence supports alternative $j$. For trader $t$ holding evidence set $E_t$, let $E'_t \subseteq E_t$ of which $E'_{t,j}$ support outcome $j$. Lastly, let $\Skt = K - k(t) -|E'_t|$ denote the amount of future evidence left to arrive, with the random variable $\Sktj$ denoting the number of future evidence that supports alternative $j$. Then for any $E'_t$ this trader may submit:
\begin{equation*}
    \text{Trader $t$'s view of alternative $j$'s score: } \theta_{tj}(E'_t) = \frac{k_j(t) + |E'_{t,j}| + \Sktj}{k(t) + |E'_{t}| + \Skt} = \frac{k_j(t) + |E'_{t,j}| + \Sktj}{K}.
\end{equation*}
Clearly, trader $t$'s perception of the score for alternative $j$ is a function of the evidence they choose to submit, and correspondingly, so is their belief about resolution\footnote{We generally only consider the trader as deciding to strategically choose a subset of their entire evidence to submit. Manipulation in the form of fabricating or tampering evidence is part of verification which we discussed in Section~\ref{sec:verification}.}. This is formally given below, with $\bm{S_{K_{T-t, *}}} = (S_{K_{T-t, 1}}, \dots, S_{K_{T-t, n}})$:
\begin{lemma}\label{lemma:belief_endogenous}
    A trader arriving at time $t$ has the following belief about the market resolving to outcome $j \in [n]$ when they submit an evidence set $E'_t \subseteq E_t$:
    \begin{equation}
          q_{tj}(E'_t) = \Ex_{\bm{S_{K_{T-t, *}}}}\left[\frac{\exp(\tfrac{1}{\tau}\theta_{tj}(E'_t))}{\sum_{\ell}\exp(\tfrac{1}{\tau}\theta_{t\ell}(E'_t))}\right]
    \end{equation}
\end{lemma}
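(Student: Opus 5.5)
The argument is a direct application of the law of total probability, combined with the resolution rule in Definition~\ref{defn:market_resolution}. I fix the trader's submitted set $E'_t \subseteq E_t$. Everything the trader knows at that point is deterministic from their own perspective: the current counts $k(t), k_j(t)$, which are public through $E^{total}_t$, and the counts $|E'_{t,j}|$ of their submitted evidence. The only randomness before the market resolves comes from two sources. The first is the vector of future supporting counts $\bm{S_{K_{T-t, *}}}$, distributed according to the trader's subjective beliefs about evidence still to arrive. The second is the softmax sampling at resolution. I will condition on the first and average over the second.

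\textbf{Step 1: conditional resolution probability.} I condition on a realization $\bm{s} = (s_1,\dots,s_n)$ of $\bm{S_{K_{T-t, *}}}$. Resolution is triggered once exactly $K$ pieces of evidence have accumulated, so the terminal evidence set is the union of
\begin{itemize}
\item the current evidence,
\item the trader's submission $E'_t$,
\item the $\Skt = K - k(t) - |E'_t|$ future pieces.
\end{itemize}
Hence the terminal score of alternative $j$ is
\[
\frac{k_j(t) + |E'_{t,j}| + s_j}{K},
\]
which is exactly $\theta_{tj}(E'_t)$ evaluated at $\bm{s}$. The denominator collapses to $K$ by the definition of $\Skt$.

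By Definition~\ref{defn:market_resolution}, the market then resolves to $j$ with probability
\[
\frac{\exp(\theta_{tj}/\tau)}{\sum_\ell \exp(\theta_{t\ell}/\tau)}.
\]
The sampling is independent of everything else given the terminal scores.

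\textbf{Step 2: unconditioning.} I apply the law of total probability, or equivalently the tower property of expectation, over the trader's subjective distribution of $\bm{S_{K_{T-t, *}}}$. This gives $q_{tj}(E'_t)$ as the expectation of the softmax term, which is the claimed formula. Note that the expectation must be taken over the joint vector, not each $\Sktj$ separately. The softmax couples all coordinates, and the coordinates of $\bm{S_{K_{T-t, *}}}$ are generally dependent: they are constrained jointly by the same $\Skt$ pieces of future evidence.

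\textbf{Main obstacle.} The main point needing care is modeling rather than calculation. I must make explicit that the trader's belief $q_t$ is defined as their subjective probability of the resolution event, and not of the underlying outcome. I also must make explicit that the future evidence distribution is well defined once $E'_t$ is fixed: since $\Skt$ depends on $|E'_t|$, the law of $\bm{S_{K_{T-t, *}}}$ is indexed by the choice of $E'_t$. I would state that dependence as part of the conditioning, and note that boundedness of the softmax in $[0,1]$ guarantees the expectation exists. No further estimates are needed.
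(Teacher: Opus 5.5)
Your proposal is correct and follows the paper's proof: both condition on a realization of the future support counts $\bm{S_{K_{T-t, *}}}$, apply the softmax resolution rule of Definition~\ref{defn:market_resolution} to the resulting terminal scores, and average via the law of total probability. Your extra remarks are sound refinements that the paper leaves implicit: taking the expectation over the joint vector, the dependence of the law of future evidence on $E'_t$, and boundedness of the softmax.
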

\begin{proof}
    Due to the randomness of future evidence, $\theta_{tj}$ is a random variable. As per our resolution procedure (see Definition~\ref{defn:market_resolution}), we take the softmax based on these scores. So the interim belief of trader $t$ regarding resolution can be expressed as a function of the evidence they submit:
    \begin{align*}
        q_{tj}(E'_t) &= P[j \sim \text{softmax}(\theta_{t1}(E'_t), \dots, \theta_{tn}(E'_t); \tau)]\\
        &= \sum_{\bm{S_{K_{T-t, *}}}} P(\bm{S_{K_{T-t, *}}}) \, \text{softmax}(\theta_{t1}(E'_t), \dots, \theta_{tn}(E'_t); \tau |\bm{S_{K_{T-t, *}}})\\
        &= \sum_{\bm{S_{K_{T-t, *}}}} P(\bm{S_{K_{T-t, *}}}) \frac{\exp(\tfrac{1}{\tau}\theta_{tj}(E'_t))}{\sum_{\ell}\exp(\tfrac{1}{\tau}\theta_{t\ell}(E'_t))}
        = \Ex_{\bm{S_{K_{T-t, *}}}}\left[\frac{\exp(\tfrac{1}{\tau}\theta_{tj}(E'_t))}{\sum_{\ell}\exp(\tfrac{1}{\tau}\theta_{t,\ell}(E'_t))}\right]\\
    \end{align*}
\end{proof}
The sensitivity of a trader's belief to the evidence they submit is not, in general, a desirable feature. This can lead to traders strategically sub-sampling their evidence to change their corresponding belief to ensure higher payoffs. This in turn means the evidence collected by the market cannot claim to be a representative of the crowd's wisdom, diminishing its usefulness. As such, our goal as the market designer is to ensure that traders can only change their belief by some arbitrarily small $\varepsilon$ through this selective submission. If combined with a trader payoff that is (1) bounded and (2) incentivizes truthful belief reporting (see Section~\ref{section:payoff}), we can claim $\varepsilon$ incentive compatibility to submit the entire evidence set along with the corresponding true belief. We now prove the first part of this and show the sensitivity of endogenous resolution to evidence submission can be made arbitrarily small.

\begin{theorem}\label{theorem:belief_sensitivity}
    Consider a trader who arrives at time $t$ with total evidence $E_t$ which, if submitted fully, would yield a belief $q(E_t)$. Then by withholding evidence and submitting a subset $E'_t \subseteq E_t$, their corresponding belief shifts $||q(E_t) - q(E'_t)||_1 \leq \frac{1}{\tau} \frac{|E_t|}{K}$. Correspondingly, for any $\varepsilon > 0$, setting $\tau \geq \frac{|E_t|}{K\varepsilon}$ ensures that $||q(E_t) - q(E'_t)||_1 \leq \varepsilon$.
\end{theorem}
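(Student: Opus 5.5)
The plan is to reduce the statement to a Lipschitz-type bound for the softmax map, applied pointwise under a coupling of the future evidence, and then pass it through the expectation of Lemma~\ref{lemma:belief_endogenous} using the triangle inequality. Write $m = |E_t \setminus E'_t| \le |E_t|$ for the number of withheld pieces, and write $\sigma(z)$ for $\text{softmax}(z;\tau)$, so $q_t(E'_t) = \Ex[\sigma(\theta_t(E'_t))]$.

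\textbf{Step 1 (coupling).} The two scenarios have different numbers of future arrivals: full submission leaves $L = K - k(t) - |E_t|$ future pieces, while withholding leaves $L+m$. I would model future evidence as a single random sequence and let the full-submission case use its first $L$ entries and the withholding case its first $L+m$ entries. Both beliefs are then expectations over the same probability space, so
\[
\|q(E_t)-q(E'_t)\|_1 \le \Ex\bigl[\|\sigma(\theta_t(E_t))-\sigma(\theta_t(E'_t))\|_1\bigr].
\]
Pointwise, the score difference is
\[
v_j := \theta_{tj}(E_t)-\theta_{tj}(E'_t) = \tfrac{1}{K}\bigl(A_j - B_j\bigr),
\]
where $A_j \in [0,m]$ counts withheld pieces supporting $j$ and $B_j\in[0,m]$ counts the $m$ extra future pieces supporting $j$. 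Hence every $v_j \in [-m/K, m/K]$, so the range satisfies $\max_j v_j - \min_j v_j \le 2m/K$.

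\textbf{Step 2 (softmax sensitivity).} For $z(s) = z_0 + s v$, the chain rule gives
\[
\tfrac{d}{ds}\sigma(z(s)) = \tfrac{1}{\tau}\bigl(\mathrm{diag}(p)-pp^\top\bigr)v, \qquad p=\sigma(z(s)).
\]
Its $\ell_1$ norm is $\tfrac{1}{\tau}\sum_i p_i |v_i - \bar v|$ with $\bar v = p^\top v$. This is a mean absolute deviation of a quantity taking values in an interval of length $\max v - \min v$. The mean absolute deviation is at most half that length, so the $\ell_1$ norm is at most $\tfrac{1}{\tau}\cdot\tfrac{\max v-\min v}{2} \le \tfrac{m}{\tau K}$.

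Integrating over $s\in[0,1]$ (fundamental theorem of calculus plus the triangle inequality) gives the pointwise bound $\|\sigma(\theta_t(E_t))-\sigma(\theta_t(E'_t))\|_1 \le \tfrac{m}{\tau K} \le \tfrac{|E_t|}{\tau K}$. Taking expectations yields the theorem. The choice $\tau \ge |E_t|/(K\varepsilon)$ then follows immediately.

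\textbf{Main obstacle.} The delicate point is Step 1. Withholding evidence does not just delete counts; it also changes the distribution of future evidence, because more future slots remain. The constant $1$ (rather than $2$) in the claimed bound relies on using the range of $v$, together with the half-range bound on mean absolute deviation, instead of the cruder $\|v\|_\infty$ bound. I would double-check that the coupling is legitimate under the trader's beliefs about future arrivals, i.e.\ that future evidence can be viewed as a prefix of one sequence whose law does not depend on $E'_t$. If that assumption is unavailable, I would state it explicitly as the modeling convention implicit in Lemma~\ref{lemma:belief_endogenous}.
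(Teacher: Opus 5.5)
Your proposal is correct and follows essentially the same route as the paper: you pass the $\ell_1$ norm inside the expectation over future evidence, use the softmax directional derivative $\frac{1}{\tau}\sigma_j(\delta_j-\bar\delta)$, bound the weighted mean absolute deviation by half the range, and bound the range of the score perturbation by $2|E_t|/K$. Your $A_j,B_j$ are the paper's $d_j,Z_j$, and the paper relies implicitly on the same prefix coupling of future evidence. Your version is slightly more rigorous in two respects: integrating the derivative along the segment avoids the paper's use of a single mean-value point $u$ for all coordinates at once, and you state the coupling assumption explicitly rather than leaving it implicit.
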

\begin{proof}
    From Lemma~\ref{lemma:belief_endogenous}, we note that this difference in belief can be expressed as follows:
    \begin{align*}
        ||q(E_t) - q(E'_t)||_1 &= \sum_{j=1}^{n}{\bigg| \sum_{\bm{S_{K_{T-t, *}}}} P(\bm{S_{K_{T-t, *}}}) \frac{\exp(\tfrac{1}{\tau}\theta_{tj}(E_t))}{\sum_{\ell}\exp(\tfrac{1}{\tau}\theta_{t,\ell}(E_t))} - \frac{\exp(\tfrac{1}{\tau}\theta_{tj}(E'_t))}{\sum_{\ell}\exp(\tfrac{1}{\tau}\theta_{t,\ell}(E_t'))}\bigg|} \\
        &=  \sum_{\bm{S_{K_{T-t, *}}}} P(\bm{S_{K_{T-t, *}}}) \sum_{j=1}^{n}\bigg| \text{softmax}(\theta_{t1}, \dots, \theta_{tn} | \bm{S_{K_{T-t, *}}}, E_t) - \text{softmax}(\theta'_{t1}, \dots, \theta'_{tn} |\bm{S_{K_{T-t, *}}}, E_t')\bigg|
    \end{align*}

For brevity, let $\bx = (\theta_{t1}, \dots, \theta_{tn})$ under a realization of $\Sktj$ and submission of true evidence $E_t$. Further, let $\sigma(\cdot) := \text{softmax}(\cdot)$. We will first determine the sensitivity of the softmax function at $\bx$ for any perturbation $\delta$. We first note the Jacobian of the softmax is the following:
\begin{equation}
    \frac{\partial \sigma_j}{\partial \bx_{m}} = \frac{1}{\tau}\sigma_j(\bx)\bigg( \bm{1}[j = m] - \sigma_m(\bx)\bigg)
\end{equation}
For any direction $\delta$, we then observe the following using the identities $(1 - \sigma_j) = \sum_{m \ne j}{\sigma_m}$ and $\sum_{j}\sigma_j = 1$:
\begin{align*}
    \nabla \sigma_j(\bx) \cdot \delta &= \frac{1}{\tau}\sum_{m=1}^n\sigma_j(\bx)\bigg( \bm{1}[j = m] - \sigma_m(\bx)\bigg) = \frac{1}{\tau}\delta_j \sigma_j \sum_{j \ne m}{\sigma_m} - \frac{1}{\tau}\sum_{m \ne j}{\delta_m \sigma_j \sigma_m} \\
    &= \frac{1}{\tau}\sigma_j\left(\sum_{m \ne j}{\delta_j \sigma_m} - \delta_m \sigma_m\right) = \frac{1}{\tau} \sigma_j\left( \delta_j(1 - \sigma_j) + \sigma_j \delta_j - \sigma_j \delta_j - \sum_{m \ne j}{\sigma_m \delta_m} \right) \\
    &= \frac{1}{\tau} \sigma_j\left(\delta_j - \sum_{m}{\sigma_m \delta_m} \right) := \frac{1}{\tau} \sigma_j\left(\delta_j - \bar{\delta} \right)
\end{align*}
where $\bar{\delta} =  \sum_{m}{\sigma_m \delta_m}$ is simply the weighted average of the $\delta$ vector according to the weights of $\sigma_m$. By the mean value theorem, there exists a $u \in [0,1]$ such that: $\sigma_j(\bx + \delta) - \sigma(\bx) = \nabla \sigma_j(\bx + u\delta) \cdot \delta$. Then using the above, we have that:
\begin{align}\label{eq:lipschitz_bound}
    \sum_{j=1}^{n}|\sigma_j(\bx + \delta) - \sigma(\bx)| = \frac{1}{\tau}\sum_{j=1}^{n}{|\sigma_j(\bx + u \delta)(\delta_j - \bar{\delta})}|
\end{align}
Let $J^+$ denote the set of indices where $\delta_j - \bar{\delta} \geq 0$ and let $J^-$ denote the set of indices where $\bar{\delta} - \delta_j \geq 0$. Note that $\sum_{j=1}^{n}{\sigma_j(\delta_j - \bar{\delta})} = 0$ by definition: softmax sums to 1 and $\bar{\delta}$ is the weighted average with respect to $\sigma_j(\bx + u \delta)$. As such $V = \sum_{j \in J^+}{\sigma_j(\delta_j - \bar{\delta})} = \sum_{j \in J^-}{|\sigma_j(\delta_j - \bar{\delta})|}$. So it suffices to bound $V$. Let $p = \sum_{j \in J^+}{\sigma_j(\bx + u \delta)}$. Then the following holds:
\begin{gather*}
    V \leq p \cdot \max_{j \in J^+}{\delta_j - \bar{\delta}} \quad \text{and} \quad V \leq (1-p) \cdot \min_{j \in J^-}{\bar{\delta}- \delta_j} \\
    \implies V^2 \leq p(1-p)\frac{\Delta \delta^2}{4} \leq \frac{\Delta \delta^2}{16} \implies V \leq \frac{\Delta \delta}{4}
\end{gather*}
where we use the fact that $p(1-p) \leq \tfrac{1}{4}$ and letting $\Delta \delta = \max_{ij}{\delta_i - \delta_j}$, we use the AM-GM inequality since $\max_{j \in J^+}{\delta_j - \bar{\delta}} + \min_{j \in J^-}{\bar{\delta} - \delta_j} = \Delta \delta$. Thus we can claim that:
\begin{equation}
    \sum_{j=1}^{n}|\sigma_j(\bx + \delta) - \sigma(\bx)| \leq \frac{1}{\tau}2V \leq \frac{\Delta \delta}{2 \tau}
\end{equation}

A trader manipulating essentially means selectively choosing which evidence to share. Let $M = |E_t|$, $M' = |E'_t|$. Being strategic is equivalent then to withholding $W = M - M'$ of evidence. Let each element of $\bx$ be the $\theta^T_{tj}$ under full revelation of $M$. Since the number of future points is fixed at $\Skt$ for a given $M$ number of points submitted, let $S_j = k_j(t) + M_{j} + \Sktj$ denote the number of those total evidence points that support alternative $j$. Then $\bx_j = \theta_{tj} = \frac{S_j}{K}$. 

Next, suppose that this trader manipulated by withholding $W$ pieces of evidence. Let $d_j$ denote the number of supports alternative $j$ lost due to this change vis-a-vis the trader $t$ submitting their true set $M$. The deviating $\tilde{\theta}_{tj}$ that will be used to compute the softmax under this manipulated evidence submission is $\tilde{\theta}_{tj} = \frac{S_j - d_j + Z_j}{K}$, where $Z_j$ denotes the realization (for alternative $j$) of the new evidence that will fill in for the withheld $W$. Comparing against $\theta_{tj}$ (or $x_j$), the corresponding $\delta_j$ such that $\tilde{\theta}_{tj} = \theta_{tj} - \delta_j$ is:
\begin{equation}
    \delta_j  = \frac{S_j}{K} - \frac{S_j - d_j + Z_j}{K} = \frac{d_j - Z_j}{K}
\end{equation}
Next, for any two parameters $\delta_j$ and $\delta_m$, we have that:
\begin{equation}
    |\delta_j - \delta_m| \leq \frac{|d_j - d_m| + |Z_j - Z_m|}{K} \leq \frac{2W}{K} \leq \frac{2M}{K} := 2w
\end{equation}
where $w = \frac{M}{K}$ denotes what the evidence this trader holds as a fraction of the total evidence needed to resolve the market. Plugging this into the Lipschitz bound derived in equation $\ref{eq:lipschitz_bound}$, we have that the maximum change in beliefs due to withholding evidence is $\frac{w}{\tau}$.
\begin{align}
    |q(M) - q(M')|_1 \leq \sum_{\Sktj = 0}^{\Skt} P(\Sktj) \frac{w}{\tau} = \frac{w}{\tau}
\end{align}
If we wish to ensure there is no more than $\epsilon$ difference between beliefs due to evidence submission, then we can simply choose the temperature $\tau$ as follows:
\begin{equation}
    \Delta q \leq \frac{M}{\tau K} \leq \varepsilon \implies \tau \geq \frac{M}{\varepsilon K} = \frac{w}{\varepsilon}
\end{equation}
\end{proof}

We make a few observations. First the limit to which any trader can affect their belief about resolution depends on the relative size of their evidence with respect to $K$. Crucially, it does not depend on what that future evidence looks like (i.e. which alternatives they support). For the platform then, it suffices to estimate the largest evidence whale in the market. The larger this whale, the larger the temperature $\tau$ becomes to diminish how much they can change their belief through selective evidence submission. 

While this approach can ensure arbitrarily small deviations, it means that market resolution becomes less sharp, especially in the presence of whales. Contrasting to exogenously resolved markets where the market belief gets sharper as one gets closer to resolution, this may not be the case here. This is less of a dilemma than it appears for two reasons. Firstly, resolution time is binding in the exogenous case and as such, it is natural that as one gets closer to it the more uncertainty disappears and the sharper the beliefs. In the endogenous case, the choice of resolution trigger $K$ is arbitrary. Secondly, in exogenous resolution, trader beliefs reflect their opinion on the unrealized world event and are of primary importance. By contrast, in endogenous markets, the submitted beliefs are about resolution and can be viewed separately from the metric implied by the actual evidence being submitted. The event-creator may, for instance, care about the argmax of the scores rather than the softmax lottery. So long as traders are incentivized to submit their full evidence set, any desired function of the cumulative evidence can be faithfully computed, independent of the resolution mechanism's sharpness.

\section{Payoff Mechanism}\label{section:payoff}
\subsection{Direct Belief Elicitation and Market Scoring Rules}
\paragraph{Classical Perspectives:} Having explored the relationship between market beliefs in endogenous resolution and evidence, we now turn to the question of designing payoffs for our proposed market. This is of central importance in both exogenous and endogenous resolution since payoff structure determines whether traders are correctly incentivized to submit true beliefs and evidence. Our starting point is market scoring rules: a family of incentive-compatible mechanisms that aggregate probabilistic forecasts from multiple participants. They are the workhorse of academic literature on prediction markets, with the most canonical version being the logarithmic market scoring rule. Traders arrive sequentially and submit their complete beliefs over the space of alternatives. At event realization, their payoff is proportional to how well they can predict the realized outcome as compared to the trader immediately before them. Specifically, the payoff considers the difference between the log probabilities assigned to the realized outcome, scaled by a \emph{liquidity parameter} $\beta$. The formal definition is given below:

\begin{definition}[Logarithmic Market Scoring Rule (LMSR)]
    Consider traders arriving sequentially with trader $t$ holding a private belief $q_t$. They may submit a possibly different belief $\hat{q}_t$ to the mechanism and their payoff when the event resolves to outcome $\omega$ is given by:
    \begin{equation*}
        \text{payoff}_t(\hat{q}_{t}) = \beta \log \hat{q}_{t}^{\omega} - \beta \log q_{t-1}^{\omega} 
    \end{equation*}
    In such markets, it is (1) dominant strategy incentive compatible (DSIC) for all traders to submit their true belief $q'=q$ and (2) the platform's loss is bounded by $\beta \log(n)$.
\end{definition}

There are two core properties of this mechanism. First, it is strictly optimal for each trader to submit their true belief $q_t$ and not manipulate it. Second, while the cumulative amount (sum of all the payoffs) paid out by the platform running this mechanism is non-zero -- i.e. the platform can lose money -- this is bounded by $\beta \log n$. To see this, observe that sum of the payoffs form a telescoping series, with the total payout being the difference between the log probabilities assigned in the first ($t=0$) and last belief ($t=T$). In the worst case, the market starts out at uniform belief and converges to the resolved winner with probability 1. While the platform losing money can seem unsatisfactory, this loss can be economically interpreted as the cost of information aggregation with strong incentive guarantees.

Note that trader $t$ can also lose money in the mechanism (i.e. have negative payoff) if they were worse at predicting than their predecessor. This is natural in any market with risk. Lastly and from an operational perspective, observe that traders submit complete beliefs up-front and receive or pay out money at market resolution.

\paragraph{Evidence-Augmented LMSR:} We now modify this mechanism to our generalized prediction market that allows for the submission of evidence alongside beliefs. Recall that $r(E)$ is a metric to compute the quality of an evidence set $E$, and $R_t = r(E_t^{total})$ denotes the quality of the cumulative evidence in the market at time $t$. The core novelty we introduce is a \emph{dynamic} liquidity parameter $\beta(\cdot)$ that adjusts based on cumulative evidence quality $R_t$. We formalize this below:

\begin{definition}[Evidence-Augmented LMSR]
    Consider a trader arriving at time $t$ with evidence $E_t$. Then their payoff in submitting a subset of evidence $E'_t \subseteq E_t$ and a belief $\hat{q}_t$ (which can be distinct from their true belief) when the resolution outcome is $\omega$ is given by\footnote{For exogenously resolved events, traders come to market with some true belief $q_t$ that is unaffected by what evidence they choose to submit. For endogenous resolution, the relationship between evidence submission and corresponding true belief is denoted $q_t(E'_t)$ and is outlined formally  in~Lemma~\ref{lemma:belief_endogenous}}: 
    \begin{align*}
        \text{payoff}_t(\hat{q}_t, E'_t) &= \beta(r(E_{t-1}^{total} \cup E'_t)) \log \hat{q}_t^{\omega} - \beta(r(E_{t-1}^{total})) \log q_{t-1}^{\omega} \\
        &= \beta(R_t)\log \hat{q}_t^{\omega} - \beta(R_{t-1}) \log q_{t-1}^{\omega}
    \end{align*}
\end{definition}

Observe that if a trader $t$ provides evidence that increases the cumulative quality, their log probability will be scaled by a different $\beta$ than the trader before them at $t-1$. In providing no evidence (or evidence of 0 marginal quality according to $r$) they face the same liquidity $\beta$ as their predecessor and their payoff is similar to classical LMSR payoff. In other words, this mechanism is a strict generalization. Observe that since the trader is free to submit any belief and evidence subset to the mechanism and the platform wants to limit exposure, our goal is to ensure (1) platform loss is bounded, (2) traders are incentivized to report true belief, which can correspond to the evidence they submit for endogenous resolution, and (3) traders are incentivized to submit their whole evidence set. The first two goals mirror the guarantees provided by standard LMSRs and we start with this.

\begin{proposition}\label{prop:lmsr_belief_ic}
    Suppose trader $t$ submits evidence set $E'_t \subseteq E_t$ and holds true belief $q_t$ or $q(E'_t)$ for exogenous or endogenous resolution respectively. Then under the Evidence Augmented LMSR, it is optimal for them to submit $q_t$ or $q(E'_t)$ in either respective case. Further, the platform loss is bounded by $\beta(R_0)\log n := \beta_0 \log n$.
\end{proposition}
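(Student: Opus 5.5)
We split the argument into its two claims: belief incentive compatibility for a fixed evidence submission, and the platform loss bound. In both we use the paper's standing assumption that $\beta(\cdot)$ is positive and non-increasing in evidence quality, i.e.\ liquidity decreases as evidence accumulates.

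\emph{Incentive compatibility.} Fix the evidence subset $E'_t$. Then $R_t = r(E_{t-1}^{total}\cup E'_t)$ is determined, and so is the scalar $\beta(R_t)>0$. The second term $\beta(R_{t-1})\log q_{t-1}^{\omega}$ depends only on the past, so it does not affect the trader's choice of $\hat q_t$. The trader's expected payoff, taken under their true belief $p$ (with $p=q_t$ in the exogenous case and $p=q(E'_t)$ from Lemma~\ref{lemma:belief_endogenous} in the endogenous case), is
\[
\beta(R_t)\sum_\omega p^\omega\log\hat q_t^\omega - \text{const}.
\]
Write $\sum_\omega p^\omega\log\hat q^\omega = \sum_\omega p^\omega\log p^\omega - \mathrm{KL}(p\,\|\,\hat q)$. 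Gibbs' inequality gives $\mathrm{KL}(p\,\|\,\hat q)\ge 0$, with equality iff $\hat q=p$. Hence reporting $\hat q_t=p$ is the unique maximiser. This holds for every fixed $E'_t$, so truthful reporting is optimal conditional on whatever evidence is submitted.

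In the endogenous case the outcome distribution depends on $E'_t$ only through $p=q(E'_t)$, and this dependence is already captured by Lemma~\ref{lemma:belief_endogenous}. Once $E'_t$ is fixed, no other channel links $\hat q_t$ to the outcome, so the exogenous argument carries over unchanged.

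\emph{Bounded loss.} Summing the payoffs over $t=1,\dots,T$ telescopes to
\[
\beta(R_T)\log q_T^\omega - \beta(R_0)\log q_0^\omega .
\]
Take the initial market belief to be uniform, $q_0^\omega = 1/n$. The second term then becomes $+\beta_0\log n$. The first term is at most $0$ because $\beta(R_T)\ge 0$ and $\log q_T^\omega\le 0$. So the total payout is at most $\beta_0\log n$, regardless of how the dynamic liquidity evolved.

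\emph{Main obstacle.} The delicate point is whether the telescoping genuinely cancels. Each trader's payoff uses $\beta(R_{t-1})$ for the predecessor's term, and this must coincide with the $\beta(R_{t-1})$ that multiplied trader $t-1$'s own report. The definition guarantees this because $R_{t-1}=r(E_{t-1}^{total})$ and $E_{t-1}^{total}$ already includes trader $t-1$'s submission $E'_{t-1}$. I would spell this bookkeeping out explicitly.

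Two further points need to be stated. First, only $\beta\ge 0$ is used for the final term, so monotonicity of $\beta$ is not needed for this proposition. Second, the bound is stated in terms of $\beta_0$, which relies on the initial uniform belief; I would add that assumption explicitly.
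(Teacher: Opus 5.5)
Your proposal is correct and matches the paper's proof. Fixing $E'_t$ and applying Gibbs' inequality is exactly the paper's Jensen argument for truthful reporting, and the loss bound is the same telescoping to $\beta(R_T)\log q_T^{\omega}-\beta(R_0)\log q_0^{\omega}$; your explicit assumption that the market starts at the uniform belief, and your note that only $\beta>0$ is used, are worthwhile clarifications of the paper's looser phrasing (it says the sum is ``maximized when \dots the initial belief is uniform,'' which is not literally true for a fixed $\omega$), not a different route.
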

\begin{proof}
    Fix the evidence submission $E'_t$ and for brevity, let $q_t$ denote the true belief (so $q_t = q_t(E'_t)$ for endogenous resolution). Let $\beta_t = \beta(R_t)$ and $\beta_{t-1} = \beta(R_{t-1})$. Next, observe that trader $t$'s objective can be simplified as such:
    \begin{equation}
        \max_{\hat{q}} \Ex_{q_t}\left[ \beta_t \log \hat{q}_t - \beta_{t-1} \log q_{t-1}\right] = \max_{\hat{q}} \beta_t \cdot \Ex_{q_t}[\log \hat{q}]
    \end{equation}
    Next, observe the following where the first inequality holds due to Jensen's
    \begin{equation}
        0 = \log \Ex_{q_t}{\left[ \frac{\hat{q}_t}{q_t}\right]} \geq \Ex_{q_t} \left[ \log \hat{q}_t \right] - \Ex_{q_t} \left[ \log q_t \right]
    \end{equation}
    As such, the trader's payoff is maximized when they report their true belief $q_t$. As for the platform loss, observe that this reduces to a telescoping sum:
    \begin{align*}
        \sum_{t=0}^{T}{\beta(R_t)\log q^{\omega}_t - \beta(R_{t-1})\log q^{\omega}_{t-1}} = \beta(R_T)\log q^{\omega}_T - \beta(R_0)\log q^{\omega}_0
    \end{align*}
    Since log probabilities are negative, this is maximized when final belief is 1 on the resolved outcome and the initial belief is uniform. This yields a maximum loss of $\beta(R_0)\log n$.
\end{proof}

\begin{figure}[htbp]
    \centering
    \begin{minipage}[c]{0.6\textwidth}
        \includegraphics[width=\linewidth]{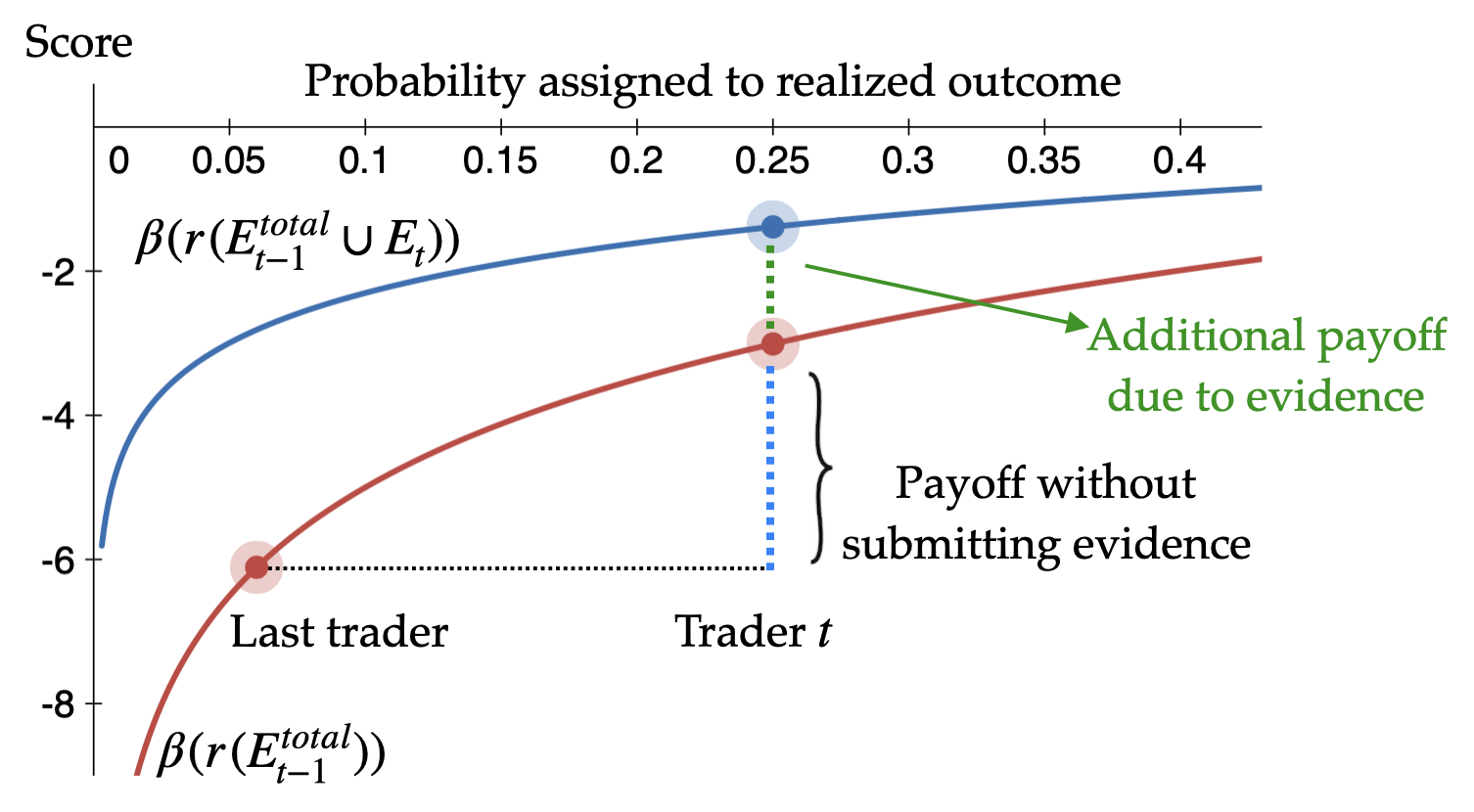}
    \end{minipage}\hfill
    \begin{minipage}[c]{0.36\textwidth}
        \caption{Log Scoring curves under different liquidity parameters. The red denotes the curve before evidence submission, and in blue, the curve after evidence submission. We visualize trader $t-1$ submitting probability 0.05 on realized outcome and trader $t$ submitting probability 0.25 along with evidence $E_t$.}
        \label{fig:evidence_payoff}
    \end{minipage}
\end{figure}

We now focus on the last desiderata and argue that traders are incentivized to submit their complete evidence set. This depends crucially on the shape of this liquidity curve $\beta(\cdot)$. Since log probabilities are negative, for positive $\beta(R)$ that \emph{decreases} in $R$, the log probability curve shifts up and becomes closer to 0. This means that keeping the belief constant (trader $t$ submits the same belief as $t-1$), by simply submitting quality evidence, they decrease $\beta$, resulting in non-zero payoff to this trader. When they submit a distinct belief alongside this, they obtain the standard LMSR payoff (which could be positive or negative) alongside this non-zero evidence payoff (see Figure~\ref{fig:evidence_payoff} for a visual). This is the core feature that incentivizes full evidence submission and we formalize it Theorem~\ref{thrm:evidence_ic}. For exogenous resolution, the incentive to submit complete evidence hold strictly; for endogenous resolution, we appeal to Theorem~\ref{theorem:belief_sensitivity} to give an $\varepsilon$ incentive compatibility guarantee. 

\begin{theorem}\label{thrm:evidence_ic}
    Let $\beta(R)$ be positive and decreasing in $R$ and suppose trader $t$ submits evidence set $E'_t \subseteq E_t$. Then $x = \beta(r(E_{t-1}^{total})) - \beta(r(E^{total}_{t-1} \cup E'_t))$ is nonnegative and the expected payoff to trader $t$ can be decomposed as:
    \begin{equation*}
          \text{payoff}_t(E'_t) = \underbrace{\beta(R_{t-1})\text{D}_{\text{KL}}(q_t(E'_t) || q_{t-1})}_{\text{(1) Belief payoff}} \quad + \underbrace{xH(q_t(E'_t))}_{\text{(2) Evidence payoff}}
    \end{equation*}
    Under exogenous resolution, it is strictly optimal to submit the complete evidence set $E_t$ and true exogenous belief $q_t$. Under endogenous resolution, for any $\varepsilon$, there exists a temperature $\tau$ such that they cannot increase payoff by more than $\varepsilon$ by submitting any partial set $E'_t \subseteq E_t$.
\end{theorem}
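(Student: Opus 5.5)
Nonnegativity of $x$ comes first and is immediate. Since $E^{total}_{t-1} \subseteq E^{total}_{t-1}\cup E'_t$, monotonicity of $r$ (Definition~\ref{def:evidence_quality}) gives $R_{t-1} \le R_t$. Because $\beta$ is decreasing, $\beta(R_t) \le \beta(R_{t-1})$, so $x = \beta_{t-1} - \beta_t \ge 0$.

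For the decomposition, I will compute the expected payoff with the outcome $\omega$ drawn from the true belief $q_t = q_t(E'_t)$. By Proposition~\ref{prop:lmsr_belief_ic}, the trader reports $\hat q_t = q_t$. The expectation is then
\[
\Ex_{q_t}\!\left[\beta_t \log q_t^{\omega} - \beta_{t-1}\log q_{t-1}^{\omega}\right] = -\beta_t H(q_t) + \beta_{t-1}\bigl(H(q_t) + \text{D}_{\text{KL}}(q_t\|q_{t-1})\bigr),
\]
using the cross-entropy identity $-\sum_\omega q_t^\omega \log q_{t-1}^\omega = H(q_t) + \text{D}_{\text{KL}}(q_t\|q_{t-1})$. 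Collecting the $H(q_t)$ terms gives exactly $\beta_{t-1}\text{D}_{\text{KL}}(q_t\|q_{t-1}) + xH(q_t)$.

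Under exogenous resolution, $q_t$ does not depend on $E'_t$. The belief payoff is therefore fixed, and only $x$ varies with the submission. Since $E'_t \subseteq E_t$, monotonicity of $r$ and $\beta$ show that $x(E'_t) \le x(E_t)$, so submitting $E_t$ is optimal. Combined with Proposition~\ref{prop:lmsr_belief_ic}, truthful reporting with full evidence is weakly optimal. Strictness requires $\beta$ to be strictly decreasing and the withheld evidence to have positive marginal quality. I will state this caveat explicitly, since without it withholding zero-marginal-quality evidence ties.

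Under endogenous resolution, the submitted $E'_t$ affects both terms through $q_t(E'_t)$. Here I must be careful about which belief the expectation is taken under. The trader's genuine belief when evaluating a deviation is $q_t(E'_t)$, because resolution depends on what is submitted. I compare $F(E'_t) = \beta_{t-1}\text{D}_{\text{KL}}(q(E'_t)\|q_{t-1}) + x(E'_t)H(q(E'_t))$ against $F(E_t)$, and I write the difference as two parts.
\begin{itemize}
\item The evidence-quality change $\bigl(x(E'_t) - x(E_t)\bigr)H(q(E'_t))$ is nonpositive by the exogenous argument, since $H \ge 0$.
\item The remaining changes in $\text{D}_{\text{KL}}$ and $H$ come from moving $q(E'_t)$ to $q(E_t)$.
\end{itemize}
Theorem~\ref{theorem:belief_sensitivity} gives $\|q(E_t) - q(E'_t)\|_1 \le |E_t|/(\tau K)$. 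I then need continuity moduli for $H$ and for $\text{D}_{\text{KL}}(\cdot\|q_{t-1})$ in $\ell_1$. For entropy, Fannes-type continuity gives a bound of order $\delta\log(n/\delta)$. For KL in its first argument, I write $\text{D}_{\text{KL}}(q\|q_{t-1}) = -H(q) - \sum_\omega q^\omega\log q_{t-1}^\omega$. The linear term is Lipschitz with constant $\max_\omega|\log q_{t-1}^\omega|$, which is finite provided $q_{t-1}$ has full support. It does, since softmax outputs are strictly positive in the endogenous case and LMSR forces reports to be interior to avoid infinite loss. Both terms therefore tend to $0$ as $\delta \to 0$, uniformly given $n$ and a lower bound on $\min_\omega q_{t-1}^\omega$. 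Choosing $\tau$ large enough that $\delta$ makes the total modulus, weighted by $\beta_{t-1}$ and $x \le \beta_0$, at most $\varepsilon$ gives the claim.

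I expect the main obstacle to be this last step, making the continuity bound uniform. The KL term's Lipschitz constant depends on how small $q_{t-1}$'s entries are. I plan to handle this by noting that softmax with temperature $\tau$ on scores in $[0,1]$ keeps every entry at least $e^{-1/\tau}/n$. So as $\tau$ grows, the market's prior entries are bounded below, assuming previous traders also report beliefs of this form. Failing that, I would state the result with $\tau$ depending on $\min_\omega q_{t-1}^\omega$.
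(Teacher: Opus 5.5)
Your proposal is correct and follows the paper's proof essentially step for step: the same expected-payoff computation giving $\beta(R_{t-1})\text{D}_{\text{KL}}(q_t\|q_{t-1}) + xH(q_t)$, and the same exogenous argument via monotonicity of $x$. In the endogenous case it uses the same split of the deviation gain into a nonpositive $\bigl(x(E'_t)-x(E_t)\bigr)H(q(E'_t))$ term plus KL and entropy shifts controlled through Theorem~\ref{theorem:belief_sensitivity}, with the same $e^{-1/\tau}/n$ lower bound on $q_{t-1}$ that the paper likewise assumes for earlier reports. The differences are minor: the paper uses mean-value Lipschitz constants $2\log(1/p_{\min})+1$ and $\log(1/p_{\min})+1$ to get the explicit threshold $\tau \ge \beta_0(L_{\mathrm{KL}}+L_H)|E_t|/(\varepsilon K)$, whereas your Fannes-type bound gives a $\delta\log(n/\delta)$ modulus that needs a lower bound only on $q_{t-1}$. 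Your remark that strictness in the exogenous case requires positive marginal quality (and $H(q_t)>0$, cf.\ the paper's footnote) is a fair sharpening of the stated claim.
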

\begin{proof}
  Under exogenous resolution, $q_t$ is independent from this and under endogenous resolution, let $q_t = q_t(E')$ for brevity. Due to Proposition~\ref{prop:lmsr_belief_ic}, the trader will always submit this belief, so the payoff is now only a function of the evidence. Since $\beta$ is non-negative and decreasing and $r(E)$ is always monotone increasing, there must exist a non-negative $x$ such that:
    \begin{equation}
        \beta(R_t) = \beta(r(E^{total}_{t-1} \cup E'_t)) = \beta(r(E_{t-1}^{total})) -x = \beta(R_{t-1})-x
    \end{equation}
    Then their expected payoff can be expressed as follows, with $H(\cdot)$ denoting the Shannon Entropy function and $\text{D}_{\text{KL}}$ the KL divergence:
    \begin{align*}
        \text{payoff}_t(E'_t) &= \sum_{\omega}{q_t^{\omega}\left[ \beta(R_t) \log q_t^\omega - \beta(R_{t-1})\log q_{t-1}^{\omega}\right]} \\
        &= \beta(R_{t-1})\sum_{\omega}{q_t^{\omega} [\log q_{t}^{\omega}} - \log q_{t-1}^{\omega}] - x \sum_{\omega}{q_t(\omega)\log q_{t}^{\omega}} \\
        & = \beta(R_{t-1})\sum_{\omega}{q_t^{\omega} [\log q_{t}^{\omega}} - \log q_{t-1}^{\omega}] + x H(q_t)\\
        & = \beta(R_{t-1}) \sum_{\omega}q_t^{\omega} \log \frac{q_t^{\omega}}{q_{t-1}^{\omega}} + xH(q_t) = \underbrace{\beta(R_{t-1})\text{D}_{\text{KL}}(q_t || q_{t-1})}_{\text{(1) Belief payoff}} + \underbrace{xH(q_t)}_{\text{(2) Evidence payoff}}
    \end{align*}

    For exogenous resolution, $q_t$ has no relationship with $E'_t$. This means that the trader can only optimize the second term. However, since Shannon entropy is always non-negative, they maximize this by maximizing $x$, which is maximized by submitting the complete evidence set $E_t$ due to the monotone increasing nature of $r(E)$ and decreasing nature of $\beta(\cdot)$\footnote{If the trader has a perfectly deterministic belief -- i.e. $q_t$ is a one-hot vector -- then the entropy is 0 and they are indifferent to submitting evidence. In practice, however, it is standard to assume trader belief lie in the simplex interior since deterministic beliefs can have unbounded worst-case loss for traders and thus becomes impossible for platform to collect. In the AMM equivalence, pushing market prices to a one-hot value requires buying an unbounded number of shares.}. In other words, it is optimal to submit the complete evidence set under exogenous resolution.

    For endogenous resolution, the resolution belief that traders hold depends on the evidence they submit. Recall the resolution rule here computes a softmax over the scores for each alternative $j$, where score $\theta_j \in [0,1]$. This means that the resolution probability given a complete set of $k$ evidence items $E$ must satisfy:
    \begin{equation*}
        p^{\omega}(E) \geq \frac{1}{\sum_{\ell}\exp(\frac{1}{\tau}\theta_\ell)} \geq \frac{1}{n\exp(\tfrac{1}{\tau})} = \frac{1}{n}\exp(-\tfrac{1}{\tau}) := p_{min}
    \end{equation*}
    Since Proposition~\ref{prop:lmsr_belief_ic} ensures only true beliefs are submitted, and any true belief $q_{t-1}$ is a mixture over such beliefs (see Lemma~\ref{lemma:belief_endogenous}), such beliefs must lie in the simplex interior. In other words: $q_{t-1}^\omega \ge p_{\min} > 0$ for all $\omega$. Next, let $q_{1}, q_{2}$ denote beliefs that can be induced by trader $t$ due to evidence submission $E_1 \subseteq E_t$ and any subset $E_2\subseteq E_t$ respectively. Then we observe:
    \begin{equation}
        \text{payoff}_t(E_{1}) - \text{payoff}_t(E_{2}) = \beta(R_{t-1})\,\big[\mathrm{KL}(q_{1}\|q_{t-1}) - \mathrm{KL}(q_{2}\|q_{t-1})\big] \;+\; \big[x_1 H(q_{1}) - x_2 H(q_{2})\big]. \label{eq:dev-gain}
    \end{equation}

    Let $f(p) := \mathrm{KL}(p \,\|\, q_{t-1}) = \sum_\omega p^\omega \log p^\omega - \sum_\omega p^\omega \log q_{t-1}^\omega$ denote the KL function on the relative interior of the simplex. Observe that the gradient is given by: $\frac{\partial f}{\partial p^\omega} \;=\; \log p^\omega + 1 - \log q_{t-1}^\omega$. Let $C := \log(1/p_{\min})$. Then:
    \begin{equation*}
        \|\nabla f(\xi)\|_\infty \;\le\; 1 + |\log(p^\omega)| + |\log(q_{t-1}^\omega)| \leq 2C + 1 :=\; L_{\mathrm{KL}}.
    \end{equation*}
    By the mean value theorem, there exists $\xi$ on the segment between $q_1$ and $q_{2}$ such that $f(q_1) - f(q_2) = \nabla f(\xi)\cdot(q_1-q_{2})$. Combining this with Holder's inequality, we have (where $\delta = \|q_1-q_{2}\|_1$):
    \begin{equation*}
        \bigl|\mathrm{KL}(q_t\,\|\,q_{t-1}) - \mathrm{KL}(q_{t'}\,\|\,q_{t-1})\bigr| \;\le\; \|\nabla f(\xi)\|_\infty \cdot \|q_t-q_{t'}\|_1 \;\le\; L_{\mathrm{KL}}\,\delta.
    \end{equation*}

    Now for the entropy term. Let $E_1 = E_t$ and $E_2$ any subset as before. Then due to $r$ being monotone increasing and $\beta$ decreasing and non-negative, it must be that $x_1 \geq x_2$. We can thus decompose:
    \begin{equation*}
        x_1H(q_1) - x_2H(q_{2}) \;=\; (x_1-x_2)\,H(q_1) + x_2\,\big[H(q_1)-H(q_2)\big]
    \end{equation*}
    The first term is always nonnegative and thus favours full evidence submission. So we focus on the second term. Observe that for Shannon entropy $H(p)$, $\partial H/\partial p^\omega = -\log p^\omega - 1$. Over the line segment defined by $q_1, q_{2}$, the norm of this gradient is bounded by $\log(1/p_{\min}) + 1 =: L_H$. Thus by applying the mean value theorem and noting that $x_2 \leq \beta(R_{t-1}) \leq \beta_0$, we have:
    \begin{equation*}
        x_1H(q_1) - x_2H(q_{2}) \geq -\beta_0 L_{H} \delta
    \end{equation*}
    Combining it all together, we have that the payoff from submitting the complete evidence set $E_1 = E_t$ is at most $\delta(L_{KL}\beta_0 + L_{H}\beta_0)$ worse than submitting any other set $E_2$. From Theorem~\ref{theorem:belief_sensitivity}, we know that the belief sensitivity $||q_1 - q_{2}||_1 = \delta \leq \frac{|E_t|}{\tau K}$. Thus:
    \begin{equation*}
        \text{payoff}_t(E_{t}) - \text{payoff}_t(E'_{t}) \geq -\frac{\beta_0 (L_{KL} + L_{H}) |E_t|}{\tau K} := -\varepsilon  \implies \tau \geq \frac{\beta_0 (L_{KL} + L_{H})|E_t|}{\varepsilon K}
    \end{equation*}
\end{proof}

Given the result above, our market satisfies all axioms set out in Definition~\ref{def:axioms}. It can be resolved both exogenously and endogenously (Axiom 1) and incentivizes traders to submit their complete evidence set (Axiom 2) and corresponding true belief (Axiom 3) --- exactly for the exogenous case and up to some arbitrary $\varepsilon$ in the endogenous case. The market belief in an LMSR is always the last belief submitted (Axiom 4) and it is clear from the decomposition above that jointly submitting evidence and belief yields the same payoff as splitting this into two consecutive trades in either order, which ameliorates any arbitrage concern (Axiom 5). This also highlights that risk-averse traders in this market can achieve a \emph{risk-free} payout through evidence submission. Specifically, a risk-averse trader $t$ can submit $q_t = q_{t-1}$ and their evidence set $E_t$ to attain a payoff of $[\beta(R_{t-1}) - \beta(R_t)]H(q_{t-1}) \geq 0$. Since the last trader's belief captures the market belief, the marginal value of evidence is highest when the market is most entropic and the liquidity curve is the steepest. This is an attractive property that goes beyond the core axioms highlighted in Section~\ref{section:model_evidence}. We state this formally below:

\begin{corollary}
    Consider a risk-averse trader $t$ who will not engage with a market if any negative payoff is possible. By submitting $q_t = q_{t-1}$ and their evidence set $E_t$, they achieve a non-negative payoff in every realization with expected value $[\beta(R_{t-1}) - \beta(R_t)]H(q_{t-1}) \geq 0$. 
\end{corollary}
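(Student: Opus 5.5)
The plan is to evaluate the Evidence-Augmented LMSR payoff directly at the report $\hat{q}_t = q_{t-1}$. The key point is that the realized payoff depends on $\omega$ only through the common factor $\log q_{t-1}^{\omega}$. Plugging the report into the definition gives, for every realized outcome $\omega$,
\begin{equation*}
\text{payoff}_t(q_{t-1}, E_t) = \beta(R_t)\log q_{t-1}^{\omega} - \beta(R_{t-1})\log q_{t-1}^{\omega} = \big[\beta(R_{t-1}) - \beta(R_t)\big]\big(-\log q_{t-1}^{\omega}\big),
\end{equation*}
where $R_t = r(E_{t-1}^{total}\cup E_t)$. I will then sign each factor separately.

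For the first factor, I will use the monotonicity of $r$ from Definition~\ref{def:evidence_quality} together with $E_{t-1}^{total} \subseteq E_{t-1}^{total}\cup E_t$. This gives $R_t \ge R_{t-1}$. Since $\beta$ is positive and decreasing, as assumed in Theorem~\ref{thrm:evidence_ic}, it follows that $\beta(R_{t-1}) - \beta(R_t) = x \ge 0$. For the second factor, $q_{t-1}^{\omega}\in(0,1]$, so $-\log q_{t-1}^{\omega} \ge 0$. The product is therefore nonnegative in every realization, which is exactly the risk-free claim.

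The only delicate point is the possibility that $q_{t-1}^{\omega} = 0$, where the log blows up. I would handle it as the paper does. Under endogenous resolution, the argument in the proof of Theorem~\ref{thrm:evidence_ic} gives $q_{t-1}^{\omega}\ge p_{\min}>0$. Under exogenous resolution, I would invoke the standing interior-belief convention noted in the footnote there. Outcomes with zero prior probability also occur with probability zero, so they do not affect the "every realization" statement almost surely.

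For the expected value, I will take the expectation with respect to the trader's own belief, which here equals $q_{t-1}$. This gives $x\sum_\omega q_{t-1}^{\omega}(-\log q_{t-1}^{\omega}) = xH(q_{t-1})$. Equivalently, it is the decomposition in Theorem~\ref{thrm:evidence_ic} with $q_t = q_{t-1}$, where the KL term vanishes. This yields $[\beta(R_{t-1})-\beta(R_t)]H(q_{t-1})\ge 0$.
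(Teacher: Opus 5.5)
Your proposal is correct and follows essentially the paper's route: report $q_{t-1}$ in the Evidence-Augmented LMSR, so that the KL (belief) term of the decomposition in Theorem~\ref{thrm:evidence_ic} vanishes and only the evidence term $xH(q_{t-1})$ remains. Your explicit per-outcome computation $x(-\log q_{t-1}^{\omega}) \ge 0$ is a useful addition: the paper's decomposition alone gives only the expected value, and it justifies the ``every realization'' part only informally, by noting that the log-scoring curve shifts upward when $\beta$ decreases.
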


\subsection{An Equivalent Automated Market-Maker}\label{sec:amm_payoff}
While the evidence augmented LMSR satisfies all our core desiderata, it suffers from two operational challenges. In LMSRs, traders submit beliefs without any stake/money; it is only upon resolution that a platform must collect any amount they are owed. This can lead to collection issues in practice. Second, traders in practice are more accustomed to equity markets which allow them to buy "shares" of different outcomes and hold for payoffs. This is also the design of markets like Kalshi and Polymarket where traders can purchase shares of different alternatives. 

Fortunately, LMSRs can be equivalently implemented through automated market makers (AMMs) that offer such an interaction and alleviate the twin concerns above. Each ``share'' of an alternative is an Arrow-Debreu contract that pays out $\$1$ if that alternative is realized, and $\$0$ if not. The core argument rests on introducing a \emph{cost-function} that takes as input the total number of shares sold for each alternative so far. The derivative of this denotes the instantaneous share price of any alternative. A trader's execution price for a batch of shares is determined by integrating over their orders. The classic result of \cite{hanson2007logarithmic, chen2007utility} show that incentives in this market coincide exactly with LMSRs.

We now show this equivalence can be extended to our evidence-augmented LMSR. That is, we propose a modified cost function defined over both cumulative shares and evidence, whose derivative defines the marginal price of shares and evidence. Formally, let $s_{i,t}$ denote the number of shares (Arrow-Debreu contracts) that have been sold for alternative $i$ up until time $t$, with $\bs_t \in \mathbb{R}_{\geq 0}^{n}$ denoting the vector of shares sold\footnote{We do not consider/allow traders to be short in the AMM. From an information aggregation perspective where trading shares is simply a conduit to expressing beliefs, any belief that can be expressed through shorting shares, can be equivalently expressed through purchasing shares due to the zero-sum nature of the prices. For binary alternatives for instance, shorting alternative $A$ is equivalent to buying alternative $B$}. Then we define the cost-function as follows:

\begin{definition}[Evidence-Augmented AMM]\label{defn:cost_function_amm}
    Consider a market at time $t$ where $\bs_t$ shares have been sold and cumulative evidence of $R_t$ collected. Then we define the evidence-augmented cost function $C(\bs_t, R_t)$ and price function $\pi(\bs_t, R_t)$ at this state as:
    \begin{equation}
        C(\bs_t; R_t) = \beta(R_t)\log \sum_{i}{\exp\left(\frac{s_{i,t}}{\beta(R_t)}\right)} \quad ; \quad \pi_i(\bs_t; R_t) = \frac{\partial C}{\partial s_{i,t}} = \frac{\exp(\tfrac{s_{i,t}}{\beta(R_t)})}{\sum_{i}{\exp(\tfrac{s_{i,t}}{\beta(R_t)})}}
    \end{equation}
    The partial derivative of cost, $\pi_i(\bs_t; R_t)$, is the instantaneous price of buying 1 share of alternative $i$, and $\frac{\partial C}{\partial R}$ can be seen as the price of providing evidence. A trader looking to buy (sell) $\bz_t \in \mathbb{R}^{n}$ shares and supply evidence $E_t$ then pays (receives) \footnote{Recall $R_{t-1} = r(E_{t-1}^{total})$.}:
    \begin{equation}
        C(\bs_{t-1} + \bz_t, r(E_{t-1}^{total} \cup E_t)) - C(\bs_{t-1}, R_{t-1}) = \int_{(\bs_{t-1}, R_{t-1})}^{(\bs_{t-1} + \bz_t, R_t)}{\frac{\partial C}{\partial R}dR + \sum_{i=1}^{n}{\pi_i(\bs; R)d \bs}}
    \end{equation}
\end{definition}

Observe that the prices here are not ``locked in'' for a given order. Every increment of the order (shares or evidence) being executed changes the prices at which the next increment is executed. That said, since the prices are the gradient of a potential function (i.e. the cost function), the corresponding vector field is conservative and thus, any integral is path independent. This means the manner in which the order is executed does not matter, giving us the no-arbitrage behaviour we desire (Axiom 5). It is also immediate from the cost function that the sum of the prices for the $n$ alternatives sum to 1. They can thus be interpreted as the current market belief over the outcomes (Axiom 4).

The core novelty in our cost expression as compared to the classic one proposed by \citet{hanson2007logarithmic, chen2007utility} is the dependence on the evidence quality $R$ in the liquidity parameter $\beta(\cdot)$. Indeed, this dynamism has a natural interpretation in the AMM setting. Rational and risk-neutral traders have some true belief $q_t$ over outcomes which can be distinct from the current market prices/belief. As their order is executed, prices shift, and if they are not budget-constrained, they trade until the prices reflect their belief. By supplying evidence and decreasing $\beta$, they make the price curve $\pi$ steeper, implying less shares need to be bought to shift the market belief to their $q_t$. We now formalize this intuition and more generally show that for such traders their optimal payoff under this AMM market coincides exactly with the evidence-augmented LMSR, thereby inheriting all its strong incentive properties. 

\begin{theorem}\label{thm:lmsr_amm_equivalence}
    For an evidence augmented AMM at state $(\bs_{t-1}, R_{t-1})$, the expected payoff from a trade $(\bz_t, E_t)$ is equivalent to the expected profit from updating an evidence-augmented LMSR from belief $q_{t-1} = \pi(\bs_{t-1}; R_{t-1})$ and liquidity $\beta(R_{t-1})$ to a belief $q_t = \pi(\bs_{t-1} + \bz_t; r(E^{total}_{t-1} \cup E_t))$ and liquidity $\beta(r(E_{t-1}^{total} \cup E_t))$. Then for a trader who has true belief $q_t$ and evidence $E_t$, there exists a trade $\bz_t$ such that $\pi(\bs_{t-1} + \bz_t; r(E^{total}_{t-1} \cup E_t))= q_t$ and it is $\varepsilon$ incentive compatible for the trader to submit trade $(\bz_t, E_t)$ as opposed to any other trade.
\end{theorem}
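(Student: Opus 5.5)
The plan is to follow the classical LMSR/AMM equivalence of \citet{hanson2007logarithmic, chen2007utility}. The only new ingredient is that the liquidity parameter changes from $\beta_{t-1} := \beta(R_{t-1})$ to $\beta_t := \beta(R_t)$ within the trade. Throughout, write $q_{t-1} = \pi(\bs_{t-1}; R_{t-1})$ and $q_t = \pi(\bs_{t-1}+\bz_t; R_t)$.

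\textbf{Step 1: a log identity for the cost function.} From Definition~\ref{defn:cost_function_amm}, $\log \pi_\omega(\bs; R) = s_\omega/\beta(R) - C(\bs;R)/\beta(R)$. Rearranging gives the key identity
\begin{equation*}
    \beta(R)\log \pi_\omega(\bs;R) = s_\omega - C(\bs;R).
\end{equation*}

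\textbf{Step 2: realized profit equals the LMSR payoff.} If the outcome is $\omega$, the trader's realized profit is the payout $z_{t,\omega}$ minus the cost $C(\bs_{t-1}+\bz_t; R_t) - C(\bs_{t-1}; R_{t-1})$. Apply the identity from Step 1 at both states and subtract:
\begin{equation*}
    \beta_t \log q_t^\omega - \beta_{t-1}\log q_{t-1}^\omega = \big(s_{t-1,\omega} + z_{t,\omega} - C(\bs_{t-1}+\bz_t;R_t)\big) - \big(s_{t-1,\omega} - C(\bs_{t-1};R_{t-1})\big).
\end{equation*}
The right-hand side is exactly the AMM profit. So the realized AMM profit equals the evidence-augmented LMSR payoff outcome by outcome, and hence also in expectation under any belief.

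\textbf{Step 3: existence of a trade reaching $q_t$.} Given a true belief $q_t$ in the simplex interior, set $z_{t,i} = \beta_t \log q_t^i - s_{t-1,i} + c$. This gives $\pi(\bs_{t-1}+\bz_t; R_t) = q_t$, because softmax is invariant to adding the same constant $c$ to every coordinate. Choose $c$ large enough that all $z_{t,i} \ge 0$, which respects the no-shorting convention. Interiority of $q_t$ is guaranteed in the endogenous case by the $p_{\min}$ bound in the proof of Theorem~\ref{thrm:evidence_ic}, and is assumed in the exogenous case as in the footnote there.

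\textbf{Step 4: incentive compatibility.} Any alternative trade $(\bz', E')$ with $E' \subseteq E_t$ induces some final price $q' = \pi(\bs_{t-1}+\bz'; r(E^{total}_{t-1}\cup E'))$. By Step 2, its expected profit equals the evidence-augmented LMSR payoff from reporting $q'$ with evidence $E'$. Proposition~\ref{prop:lmsr_belief_ic} and Theorem~\ref{thrm:evidence_ic} then show that reporting the true belief with the full evidence set $E_t$ is optimal. This holds exactly under exogenous resolution and up to $\varepsilon$ under endogenous resolution, for the temperature $\tau$ chosen there.

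\textbf{Main obstacle.} The step needing most care is the reduction in Step 4: every AMM trade must map to an LMSR report. Many share vectors $\bz'$ map to the same price, so the map $\bz' \mapsto q'$ is many-to-one. This is harmless, because by Step 2 the profit depends on $\bz'$ only through the final price. It follows from the identity $z_\omega - \Delta C = \beta_t\log q'^\omega - \beta_{t-1}\log q_{t-1}^\omega$, which shows that shifting all coordinates of $\bz'$ by a common constant leaves the profit unchanged.
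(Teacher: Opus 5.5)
Your proposal is correct and follows the paper's proof. You show that the AMM profit equals the evidence-augmented LMSR payoff outcome by outcome (your identity $\beta(R)\log\pi_\omega(\bs;R) = s_\omega - C(\bs;R)$ is a compact form of the paper's explicit expansion), use surjectivity of the price map to reach the target belief, and reduce any deviating trade $(\bz', E')$ to an LMSR report so that Proposition~\ref{prop:lmsr_belief_ic} and Theorem~\ref{thrm:evidence_ic} apply. Your explicit shifted construction of $\bz_t$, which respects no-shorting and notes that only interior beliefs are reachable, and your one-step handling of joint deviations are slightly more careful than the paper's bare surjectivity claim and case split, but the argument is the same.
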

\begin{proof}
    Suppose the realized outcome at resolution is alternative $i$. Let us look at the payoff under the AMM market. The trader earns $\bz_{i,t}$ for the winning alternative but pays the cost of transaction:
    \begin{align*}
        \text{AMM payoff} &= \bz_{i,t} - [C(\bs_{t-1} + \bz_t, r(E^{total}_{t-1} \cup E_t)) - C(\bs_{t-1}, R_{t-1})] \\
        &= \bz_{i,t} - \left[ \beta(r(E^{total}_{t-1} \cup E_t))\log \sum_{j} \exp\left(\frac{s_{j, t-1} + z_{j,t}}{\beta(r(E^{total}_{t-1} \cup E_t))}\right) - \beta(R_{t-1})\log \sum_{j} \exp\left(\frac{\bs_{j, t-1}}{\beta(R_{t-1})}\right) \right]
    \end{align*}

    Now consider the evidence-augmented LMSR where the last reported belief is $q_{t-1} = \pi(\bs_{t-1}; R_{t-1})$. Consider the true belief of an trader to be $q_t = \pi(\bs_{t-1} + \bz_t; r(E^{total}_{t-1} \cup E_t))$ and they have evidence $E_t$. Since the modified LMSR is incentive compatible, the trader will report this true value and submit their evidence. Their payoff then is:
    \begin{align*}
        \text{LMSR payoff} &= \beta(r(E^{total}_{t-1} \cup E_t))\log \pi(\bs_{t-1} + \bz_t; r(E^{total}_{t-1} \cup E_t)) - \beta(R_{t-1})\log \pi(\bs_{t-1}; R_{t-1}) \\
        &= \beta(r(E^{total}_{t-1} \cup E_t)) \log \left[ \frac{\exp(\tfrac{s_{i, t-1} + \bz_{i,t}}{\beta(r(E^{total}_{t-1} \cup E_t))})}{\sum_{j}{\exp(\tfrac{s_{j, t-1} + \bz_{j,t}}{\beta(r(E^{total}_{t-1} \cup E_t))})}} \right] - \beta(R_{t-1})\log\left[ \frac{\exp(\tfrac{s_{i, t-1}}{\beta(R_{t-1})})}{\sum_{j}{\exp(\tfrac{s_{j,t-1}}{\beta(R_{t-1})})}}\right] \\
        &= (\bs_{i, t-1} + \bz_{i,t}) - \beta(r(E^{total}_{t-1} \cup E_t))\log \sum_{j}{\exp(\tfrac{s_{j,t-1} + \bz_{j,t}}{\beta(r(E^{total}_{t-1} \cup E_t))})} - \bs_{i, t-1} + \beta(R_{t-1}) \log\sum_{j}{\exp(\tfrac{s_{j, t-1}}{\beta(R_{t-1})})} \\
        &= \bz_{i, t} - \left[ \beta(r(E^{total}_{t-1} \cup E_t))\log \sum_{j}{\exp(\tfrac{s_{j, t-1} + z_{j, t}}{\beta(r(E^{total}_{t-1} \cup E_t))})} - \beta(R_{t-1})\log \sum_{j}{\exp(\tfrac{s_{j, t-1}}{\beta(R_{t-1})})} \right]
    \end{align*}
    which we observe to be the same payoff under the AMM. Next note that in the evidence-augmented LMSR, submitting their true belief $q_t$ along with the evidence $E_t$ is $\varepsilon$ optimal for the trader. Next we note that since the price function is essentially a scaled softmax function, it is surjective over the simplex. So for any belief $q'$ and evidence $E'_t \subseteq E_t$, there exists a trade $\bz'$ such that $\pi(\bs_{t-1} + \bz'; r(E^{total}_{t-1} \cup E'_t)) = q'$. Now suppose that the trader chooses a $(\bz', e')$ such that $q' \ne q$. As we show above, this would lead to the same payoff as submitting belief $q' \ne q$ under the modified LMSR, which we know leads to strictly worse payoff than submitting true beliefs (Proposition~\ref{prop:lmsr_belief_ic}). Next, consider an trader choosing a $(\bz', E_t' \subset E_t)$ such that $\pi(\bs_{t-1} + \bz'; r(E^{total}_{t-1} \cup E'_t) = q_t$. Again, as we show above, this would lead to the same payoff as submitting belief $q_t$ with partial evidence $E_t' \subset E_t$ under modified LMSR. But we know from Theorem~\ref{thrm:evidence_ic} that the trader payoff in the LMSR will be $-\varepsilon$ higher in submitting the full evidence along with the true belief. Thus, the trader only submits the trade with full evidence that updates market price to $q$.
\end{proof}
With the result above, we satisfy the incentive compatibility desiderata of Axioms 2 and 3. Further, since any trader interaction here has an analogue in the LSMR realm, we also inherit the $\beta_0 \log n$ bounded worst-case platform loss for running this mechanism. We lastly turn to understanding the value of evidence in this market. In the evidence-augmented LMSR, traders are able to directly submit belief and evidence. As such, we decomposed the payoff into the value of belief update and the value of evidence submission. The latter was shown to be proportional to the entropy of the market belief. 

In the evidence-augmented AMM, there is no explicit way to submit "belief" since beliefs are endogenously defined as a function of both shares bought and evidence submitted. Even if one contributes evidence, the market belief shifts since the price curve re-adjusts to be steeper. So while the equilibrium price after a rational risk-neutral trader completes their order (buying shares and submitting evidence) $\pi(\bs_{t-1} + \bz, r(E^{total}_{t-1} \cup E_t))$ should match their true belief $q_t$ there is no clean decomposition of the AMM payoff into belief update and evidence update in the style done in LMSR. That said, a more natural decomposition in the AMM setting is on the execution cost based on the shares bought $\bz$ and evidence submitted $E_t$:
\begin{gather*}
    \text{Execution Cost: }C(\bs_{t-1} + \bz_t, r(E^{total}_{t-1} \cup E_t)) - C(\bs_{t-1}, R_{t-1})\\
    = \underbrace{C(\bs_{t-1}, r(E^{total}_{t-1} \cup E_t)) - C(\bs_{t-1}, R_{t-1})}_{\text{cost of evidence}} + \underbrace{C(\bs_{t-1}+ \bz_t, r(E^{total}_{t-1} \cup E_t)) - C(\bs_{t-1}, r(E^{total}_{t-1} \cup E_t))}_{\text{cost of position/shares}}
\end{gather*}
If the cost of the first term is shown to be negative (i.e the trader received a discount for submitting evidence) then we spiritually recover the same incentives for risk averse traders: if one wishes to take on no risk (or is budget constrained), they can buy no shares and can still get full benefit for submitting evidence. It is clear from Definition~\ref{defn:cost_function_amm} that this evidence cost is exactly equal to $\int_{R_{t-1}}^{R_t}{\frac{\partial C}{\partial R}}$ where $\frac{\partial C}{\partial R}$ is the marginal cost of evidence submission. We now show that for decreasing $\beta$, this is negative and proportional to the entropy of the market prices, spiritually similar to the LMSR evidence payoff result.
\begin{theorem}\label{thrm:marginal_evidence_cost}
    At any market state $(\bs, R)$, the marginal cost of evidence is $\frac{\partial C}{\partial R} = \frac{\partial \beta}{\partial R}H(\pi(\bs, R))$. As such, $C(\bs_{t-1}, r(E^{total}_{t-1} \cup E_t)) - C(\bs_{t-1}, R_{t-1})$ is always non-positive for any decreasing $\beta(R)$ function.
\end{theorem}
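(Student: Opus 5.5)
The plan is a direct computation of $\partial C/\partial R$ by the chain rule, followed by integration along the $R$-direction with $\bs=\bs_{t-1}$ held fixed. Write $C(\bs;R) = g(\beta(R))$ with
\[
g(\beta) := \beta \log \sum_i \exp(s_i/\beta),
\]
so that $\frac{\partial C}{\partial R} = g'(\beta)\,\frac{\partial \beta}{\partial R}$. It therefore suffices to show that $g'(\beta) = H(\pi(\bs,R))$, where $\pi_i = \exp(s_i/\beta)/\sum_j \exp(s_j/\beta)$ are the prices from Definition~\ref{defn:cost_function_amm}.

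Differentiating $g$ with the product rule gives
\[
g'(\beta) = \log \sum_i e^{s_i/\beta} + \beta \cdot \frac{\sum_i e^{s_i/\beta}\,(-s_i/\beta^2)}{\sum_j e^{s_j/\beta}} = \log Z - \frac{1}{\beta}\sum_i \pi_i s_i ,
\]
where $Z = \sum_j e^{s_j/\beta}$. The key identity comes from taking logarithms of the price formula:
\[
\log \pi_i = s_i/\beta - \log Z .
\]
Using $\sum_i \pi_i = 1$, this yields
\[
H(\pi) = -\sum_i \pi_i \log \pi_i = \log Z - \frac{1}{\beta}\sum_i \pi_i s_i ,
\]
which matches $g'(\beta)$ exactly. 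Hence $\frac{\partial C}{\partial R} = \frac{\partial \beta}{\partial R} H(\pi(\bs,R))$. This is essentially the standard log-partition/entropy duality and is the only real computation.

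For the second claim, I would write the evidence cost as an integral along the segment from $R_{t-1}$ to $R_t = r(E^{total}_{t-1}\cup E_t)$ with $\bs$ fixed, using the fundamental theorem of calculus:
\[
C(\bs_{t-1}, R_t) - C(\bs_{t-1}, R_{t-1}) = \int_{R_{t-1}}^{R_t} \beta'(R)\, H(\pi(\bs_{t-1},R))\,dR .
\]
By monotonicity of $r$ (Definition~\ref{def:evidence_quality}), $R_t \ge R_{t-1}$, so the interval is correctly oriented. On it, $\beta'(R)\le 0$ because $\beta$ is decreasing, and $H\ge 0$. The integrand is therefore nonpositive and so is the integral.

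The main obstacle is regularity: the statement uses $\partial\beta/\partial R$, so I would assume $\beta$ is differentiable (or at least absolutely continuous). If $\beta$ is merely monotone, I would avoid the derivative altogether and argue directly that $g$ is nondecreasing in $\beta$, since $g'(\beta) = H \ge 0$ for every $\beta>0$. Then $\beta(R_t)\le\beta(R_{t-1})$ gives $g(\beta(R_t)) \le g(\beta(R_{t-1}))$, which proves nonpositivity without any differentiability of $\beta$. I would present this monotonicity argument as the robust route and the integral formula as its interpretation.
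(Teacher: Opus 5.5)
Your proposal is correct and follows the paper's proof: the paper also differentiates $C = \beta\log K$ with the product and chain rules, and uses $\log\pi_i = s_i/\beta - \log K$ together with $\sum_i \pi_i = 1$ to reduce the bracket to $H(\pi)$, then reads the sign off $\partial\beta/\partial R \le 0$. Your integration over $[R_{t-1}, R_t]$ (with the interval's orientation justified by monotonicity of $r$) and your derivative-free variant via $g'(\beta) = H(\pi) \ge 0$ make explicit the step from the sign of the marginal cost to the sign of the finite difference, which the paper leaves implicit.
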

\begin{proof}
    Let $K = \sum_{i}{\exp\left(\frac{s_i}{\beta(R)} \right)}$. The price of information/evidence $\frac{\partial C}{\partial R}$ is given as follows:
    \begin{align*}
        \frac{\partial C}{\partial R} &= \frac{\partial \beta}{\partial R} \log(K) + \frac{\beta(R)}{K}\frac{\partial K}{\partial R} \quad ; \quad  \frac{\partial K}{\partial R} = - \sum_{i}{\exp\left(\frac{s_i}{\beta(R)}\right)\frac{s_i}{\beta^2(R)}\frac{\partial \beta}{\partial R}}
    \end{align*}
    We therefore have that:
    \begin{align*}
        \frac{\partial C}{\partial R} & = \frac{\partial \beta}{\partial R}\left[ \log(K) - \frac{1}{K}\sum_{i}{\exp\left(\frac{s_i}{\beta(R)}\right)\frac{s_i}{\beta(R)}} \right] = \frac{\partial \beta}{\partial R}\left[ \log(K) - \sum_{i}{\underbrace{\frac{1}{K}\exp\left(\frac{s_i}{\beta(R)}\right)}_{\pi_i(\bs; R)}\frac{s_i}{\beta(R)}} \right]
    \end{align*}
    Next, observe that by definition $\log(\pi_i) = \frac{s_i}{\beta(R)} - \log(K)$. Thus, by substituting in this relationship to the expression above, we have:
    \begin{align*}
        \frac{\partial C}{\partial R} &= \frac{\partial \beta}{\partial R}\left[ \log(K) - \sum_{i}{\pi_i(\bs; R)(\log \pi_i(\bs; R) + \log(K))}\right] \\
        &= \frac{\partial \beta}{\partial R}\left[ \log(K) + H(\pi(\bs; R)) - \log(K)\sum_{i}{\pi_i(\bs; R)}\right] = \frac{\partial \beta}{\partial R}H(\pi(\bs; R))
    \end{align*}
    where $H(p) = -\sum_{i}{p_i \log p_i}$ is the Shannon entropy. Since entropy is always positive, $\frac{\partial C}{\partial R}$ has the same sign as $\frac{\partial \beta}{\partial R}$, which is decreasing by definition.
\end{proof}
Beneath the equivalence of the two market forms lies a subtle asymmetry. For a risk-neutral trader the two representations coincide exactly, just as the classical correspondence between scoring rules and market makers predicts. The risk-averse, evidence-only trade is where they diverge. In the scoring-rule view, such a trader reports an unchanged belief, $q_t = q_{t-1}$, along with evidence $E_t$; the market belief is by definition left untouched; only the liquidity, and hence the payoff, responds to the new evidence. The corresponding risk-avers trade AMM trade is to buy nothing -- $\bz = 0$ and submit evidence $E_t$. This does, however, move the market belief since $\pi(\bs; R)$ depends on the evidence record $R$ as well as on the shares $\bs$. So fresh evidence reprices the book even when no shares change hands. The same evidence is therefore belief-neutral in one representation but belief-moving in the other. In short, evidence-only trades differ in both their payoff and their effect on the market belief between the modified LMSR and AMM implementations of our market\footnote{In both cases, however, the evidence only payoff is proportional to the current entropy of beliefs.}.


\section{Practical Considerations}\label{section:practical}
\subsection{Evidence Verification}\label{sec:verification}
Our work so far considers one axis of strategic reasoning about evidence: withholding. However, traders can also (knowingly or not) submit dubious evidence. Verification is thus a first-class concern for any practical instantiation of an evidence market. Indeed, the entire incentive structure derived in Sections~\ref{sec:endogenous_resolution} and $\ref{section:payoff}$ rests on the assumption that the quality function $r(\cdot)$ faithfully reflects the informativeness of submitted evidence. If a trader can affect endogenous resolution or drive up $R_t$ and harvest the entropy payoff by submitting irrelevant, fabricated, or duplicate content, then the platform loses out in both money and poor information aggregation. The health of the market thus rests on being able to filter out such submissions. The quality function $r$ introduced in Definition~\ref{def:evidence_quality} is where we model this filtering taking place. Conceptually, $r$ should assign zero marginal contribution to evidence that is (i) \emph{irrelevant} to the event or the alternatives being considered, (ii) \emph{wrong} or factually/objectively incorrect, or (iii) \emph{a semantic duplicate} of evidence already collected in $E^{total}_{t-1}$. The monotonicity assumption $r(E_1) \leq r(E_2)$ for $E_1 \subseteq E_2$ then ensures that such evidence has no effect on the market\footnote{The current model does not penalize such dubious submission - it just gives zero benefit for doing so. In practice, it may be fruitful to consider additional external incentives to discourage such behavior. Staking to submit evidence may be one such approach.}.

In any case, the choice of $r$ depends significantly on whether resolution is exogenous or endogenous. In the exogenous case, evidence only affects payoff magnitudes and explanation informativeness; the resolution itself is determined externally. Verification thus needs to filter out content that would waste platform funds or pollute the aggregation, but the consequences of a verification error are limited to just this. In the endogenous case, by contrast, evidence directly determines resolution, and a trader injecting fabricated evidence can shift resolution probabilities in their favor, thereby increasing payoff in both the evidence and belief axes. The verification machinery must accordingly be more robust.

A first instinct in both cases may be to turn to the rich literature on peer prediction~\citep{miller2005eliciting}. These are mechanisms that truthfully elicit opinions from agents about unverifiable signals (absence of ground truth) by rewarding them on how their reports relate to those of their peers. Unfortunately, the incentive guarantees here typically require peers to have no external incentives, which is generally not possible in our setting since a peer could also be a trader. If they have an exiting trade in the market and resolution is endogenous, they are incentivized to only approve signals that confirm their trade. If they may trade in the future, they are incentivized to not approve any evidence (in either endogenous or exogenous resolution) since it would eat up liquidity (i.e. decrease $\frac{\partial \beta}{\partial R}$) that they would otherwise access. \citet{freeman2017crowdsourced} show that ensuring incentive compatibility with general external factors essentially requires increasing verification payments and transaction costs. Given the external incentive in the endogenous case is prediction market payoff, this could be quite large and this approach would be generally unpalatable. As such, we consider verification through the LLM-as-a-Judge paradigm, with different implementations for the endogenous and exogenous resolution cases. In both cases, the LLM inference cost for verification is expected to come from transaction fees.  

\paragraph{Endogenous Resolution.} 
We start by considering what an appropriate quality function $r(\cdot)$ looks like in this regime by using the LLM-evaluation example. Recall that in the endogenous setting each atomic evidence $e$ has a binary outcome $X_{ie} \in \{0,1\}$ with respect to each alternative $i$; for LLM evaluation $X_{ie} = 1$ corresponds to model $i$ answering question $e$ correctly. One natural per-evidence quality metric is the \emph{discriminative score}: the ratio of alternatives that fail on $e$ to those that succeed on $e$. Evidence on which no alternative succeeds receives zero credit.

\begin{definition}[Discriminative Score]\label{def:discriminative_score}
    For an atomic evidence $e$ with binary outcomes $\{X_{ie}\}_{i=1}^n$, let $c(e) := \sum_i X_{ie}$ count the alternatives supporting $e$ and $w(e) := n - c(e)$ count those refuting it. The per-evidence discriminative score is
    \begin{equation}\label{eq:discriminative_score}
        r_{\mathrm{disc}}(e) \;:=\; \begin{cases} w(e) / c(e), & c(e) \geq 1, \\ 0, & c(e) = 0, \end{cases} \quad \text{and} \quad r(E) := \sum_{e \in E} r_{\mathrm{disc}}(e)
    \end{equation}
\end{definition}

$r_{\text{disc}}$ is non-negative and (additively) monotone by construction, satisfying Definition~\ref{def:evidence_quality}. The per-evidence score has a clean interpretation; it is maximized precisely when exactly one alternative is corroborated by $e$ and all others are negated. This nicely aligns with the ideal discriminating question in an LLM-evaluation context where the goal is to find questions where a few models answer correct and others answer wrong. This score drops to zero when the evidence is non-discriminating -- either all alternatives are corroborated or negated by $e$. 

Given this concrete quality function, the verification machinery's task reduces to a single binary decision per submission: should a submitted piece of evidence $e$ be admitted to $E^{total}$, with the outcomes $\{X_{ie}\}_i$ then computed and used for resolution. As discussed, verification must reject $e$ if it is irrelevant, incorrect, or a duplicate. Given the aforementioned challenges of peer prediction, we propose an \emph{LLM verification with staked disputes} mechanism, drawing on the proof-of-stake paradigm in distributed systems~\citep{buterin2017casper} and the optimistic-rollup design pattern from Ethereum scaling~\citep{kalodner2018arbitrum}. When evidence $e$ is submitted, a frontier LLM judge --- equipped with broad world knowledge and existing market context such as $E^{total}_{t-1}$ --- issues a tentative accept/reject decision along with a rationale. A bounded dispute window then opens, during which any market participant may challenge the decision by staking capital. The right to dispute is allocated via a second-price auction over stakes, with the winning challenger required to submit a written counter-argument. The judge re-adjudicates with the full history in context (the original evidence, its initial verdict and rationale, and the disputer's counter-argument). If the re-adjudication overturns the initial decision, the disputer's stake is returned and a small reward is paid out from the platform's transaction-fee pool; otherwise the stake is slashed by the platform.

This design inherits the core security argument from proof-of-stake. Economic skin in the game replaces any assumption that participants are intrinsically honest, and slashing penalties incentivizes honest disputes. Crucially, the mechanism sidesteps the peer-prediction objection raised earlier due to the LLM judge acting as the gate-keeper. Peers are not asked to truthfully report a signal in the presence of side-channel incentives; they are instead asked to stake capital to override an LLM decision they believe to be wrong. A trader with private information that the LLM erred has positive expected value in disputing; a trader without such information faces zero expected gain and will not dispute frivolously. Staking thus provides a costly-but-available channel for traders to inject information when they deem the verifier to have erred. The frontier LLM on the other hand serves as the arbiter given multiple pieces of evidence, as opposed to a one shot oracle. 

From a practical perspective, the dispute window must be bounded so verification eventually terminates. Upper and lower bounds need to be enforced for staking to deter spurious or capital rich adversaries from dominating disputes. The inference cost of running the judge must be funded by the platform through transaction fees, with slashed dispute stakes providing additional revenue. Lastly, we cannot claim the mechanism is bulletproof since the LLM judge remains a centralized point of failure. However, we believe the increasing ability of LLMs will only minimize such errors.

\paragraph{Exogenous Resolution.} When resolution is determined by an external time-bound event, evidence plays no role in adjudicating the outcome. It only enriches the aggregated information product and scales the liquidity parameter. Verification is correspondingly less critical and we simply need to design $r$ to ensure the platform is not paying out for and aggregating poor quality information. For this we propose a lightweight scheme inspired by the LLM-judge construction of \citet{srinivasan2025tellmewhy}.

Concretely, consider a frontier LLM judge $J$ equipped with broad world knowledge and the cumulative evidence record $E^{total}_{t-1}$. We elicit from $J$ a belief distribution $p_J(\,\cdot\, ; E^{total}_{t-1})$ over the $n$ alternatives. When a new submission $E_t$ arrives, we append it to the judge's context and re-elicit a posterior belief $p_J(\,\cdot\, ; E^{total}_{t-1} \cup E_t)$. The marginal quality contribution of the new evidence is then defined as the divergence between these two distributions:
\begin{equation}\label{eq:r_exogenous}
    r(E^{total}_{t-1} \cup E_t) - r(E^{total}_{t-1}) := D_{KL}\big(p_J(\cdot \, ; E^{total}_{t-1} \cup E_t) \,\|\, p_J(\cdot \, ; E^{total}_{t-1})\big),
\end{equation}
with $R_t$ being the running sum of such contributions and $r(\emptyset) = 0$. This definition satisfies the non-negativity and monotonicity requirements of Definition~\ref{def:evidence_quality} by construction, since KL divergence is non-negative. Indeed, KL-divergence in \eqref{eq:r_exogenous} can be replaced by any other f-divergence (total variation, Jensen-Shannon) without affecting anything qualitatively.

The key property of this construction is that a competent judge should be \emph{unmoved} by evidence lacking genuine informational content. If $E_t$ is irrelevant to the event in question or is already contained in $E_{t-1}^{total}$, the judge has no reason to revise its belief; if $E_t$ is factually wrong, the judge's world knowledge should lead it to discount the submission. In these failure modes, the divergence in \eqref{eq:r_exogenous} collapses to zero (or near-zero), and the trader receives no evidence payoff. Conversely, evidence that is novel, correct, and pertinent will move the judge's belief and accrue credit proportional to its informational content. This aligns the incentives of evidence submission with the platform's epistemic goals: traders are paid for information that genuinely changes a well-informed observer's view of the event. The overall scheme inherits the verification asymmetry discussed in Section~\ref{sec:related_works} --- the judge need only \emph{evaluate} a submitted piece of evidence rather than generate it, which is structurally easier and well-supported by recent work on LLM verifiers~\citep{lightman2023verify, zhang2024genrm, khalifa2025thinkprm}. As before, while the LLM-judge does remain a central point of failure, it becomes less so as their power and capabilities improve.

\subsection{Execution Algorithms}\label{sec:execution}
The AMM mechanism developed in Section~\ref{section:payoff} implicitly assumes that trades arrive and are executed sequentially: each order's evidence is fully verified, $r(E_t)$ computed, the cumulative quality $R_t$ updated, the order executed before the next trader interacts with the market. In practice this is unworkable. Verification, particularly through an LLM judge of the kind described just earlier, takes meaningful time, and a market that pauses for each verification will simply be too slow for most traders. Concretely, consider a trader $t$ who submits an order $(\bz_t, E_t)$ and is followed seconds later by a trader $t+1$ with $E_{t+1} = \emptyset$. Verification of $E_t$ may not have completed by the time $t+1$ arrives. What price should the market quote $t+1$? Should they have to wait until $E_t$ is verified to trade? 

We address this through an asynchronous execution protocol consisting of two concurrent algorithms: a \emph{trade execution} routine that processes incoming orders immediately at a conservative price, and a \emph{verification adjustment} routine that consumes the evidence queue in arrival order and refunds the difference between the conservative charge and the true synchronous cost. The protocol is designed to satisfy four desiderata:

\begin{itemize}[itemsep=0.0ex,leftmargin=4ex]
    \item \textbf{Equivalence to synchronous execution.} Each trader's final net payment matches what they would have paid under sequential synchronous execution, preserving all incentive guarantees of Section~\ref{section:payoff}.
    \item \textbf{Immediate execution for evidence-free orders.} A trader submitting $E_t = \emptyset$ never waits on any verification queue.
    \item \textbf{At-most-one charge per order.} Traders can only be charged (transfer of money from agent to platform) at-most once during the execution of a single order. Any subsequent transfer is a rebate from platform to trader.
    \item \textbf{Lower-bounded refund at execution.} The platform can quote the trader a guaranteed minimum refund at the time the order is executed.
\end{itemize}

The central design choice is to execute each order \emph{pessimistically}, i.e.~charge the maximum cost the trader could face once the unverified evidence resolves and then refund the difference once verification completes. Whether the pessimistic regime corresponds to higher or lower $\beta$ depends on the order, since traders do not uniformly prefer high liquidity. The following lemma characterizes this dependence and gives us a computable worst-case $\beta$ within any liquidity interval.

\begin{lemma}\label{lem:beta_grid_opt}
    For fixed cumulative shares $\bs$, share order $\bz$, and liquidity $\beta$, let $\xi(\bz; \bs, \beta) := C(\bs + \bz; \beta) - C(\bs; \beta)$ denote the execution cost of the shares. Then
    \begin{equation}\label{eq:dxi_dbeta}
        \frac{\partial \xi}{\partial \beta} \;=\; H\big(\pi(\bs + \bz; \beta)\big) - H\big(\pi(\bs; \beta)\big),
    \end{equation}
    So $\xi$ is increasing in $\beta$ when the trade raises market entropy and decreasing when it lowers it. 
\end{lemma}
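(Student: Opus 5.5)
The plan is to reduce the lemma to the marginal-cost computation already done in Theorem~\ref{thrm:marginal_evidence_cost}. That theorem differentiated $C(\bs; R)$ with respect to $R$ through $\beta(R)$. Along the way it established the identity
\begin{equation*}
    \frac{\partial}{\partial \beta}\Big[\beta \log \textstyle\sum_i \exp(s_i/\beta)\Big] \;=\; H\big(\pi(\bs;\beta)\big),
\end{equation*}
since $\frac{\partial C}{\partial R} = \frac{\partial \beta}{\partial R}\,H(\pi(\bs;R))$ and the chain rule factors out $\partial\beta/\partial R$. I would first restate this identity directly in the variable $\beta$, treating $C(\cdot;\beta)$ as a function of the liquidity itself. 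For self-containment I would rederive it in two lines. Write $Z(\bs,\beta)=\sum_i \exp(s_i/\beta)$, so that
\begin{equation*}
    \partial_\beta C = \log Z - \sum_i \pi_i\, s_i/\beta.
\end{equation*}
Then substitute $s_i/\beta = \log \pi_i + \log Z$. The $\log Z$ terms cancel because $\sum_i \pi_i = 1$, leaving $-\sum_i \pi_i\log\pi_i = H(\pi(\bs;\beta))$.

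Second, I would apply this identity at two share vectors, $\bs+\bz$ and $\bs$, with the same $\beta$. The execution cost $\xi(\bz;\bs,\beta)$ is the difference of the two cost evaluations. Differentiation in $\beta$ is linear, and $\bs,\bz$ are held fixed, so
\begin{equation*}
    \partial_\beta \xi = H\big(\pi(\bs+\bz;\beta)\big) - H\big(\pi(\bs;\beta)\big),
\end{equation*}
which is exactly \eqref{eq:dxi_dbeta}. The monotonicity claim then follows immediately from the sign of this difference: $\xi$ is increasing in $\beta$ when the post-trade prices have higher entropy than the pre-trade prices, and decreasing otherwise.

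No step is a real obstacle. The only point needing care is noting that $C$ is smooth in $\beta$ on $\beta>0$, so differentiating under the finite sum is legitimate. I would also make explicit that both entropy terms are evaluated at the same $\beta$, in contrast to the evidence-cost setting, where $\beta$ itself changes between the two states.
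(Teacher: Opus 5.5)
Your proposal is correct and follows the paper's proof essentially verbatim. It writes $C(\bs;\beta)=\beta\log K$ with $K=\sum_i \exp(s_i/\beta)$, gets $\partial C/\partial\beta=\log K-\sum_i \pi_i s_i/\beta$, and substitutes $s_i/\beta=\log\pi_i+\log K$ to obtain $H(\pi(\bs;\beta))$. It then differences this identity at $\bs+\bz$ and $\bs$ with $\beta$ held fixed. As in the paper, read the monotonicity conclusion pointwise in $\beta$: the sign of $H(\pi(\bs+\bz;\beta))-H(\pi(\bs;\beta))$ can itself change as $\beta$ varies.
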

\begin{proof}
    Since $C(\bs; \beta) = \beta \log K$ with $K = \sum_i \exp(s_i/\beta)$, the product rule gives
    \begin{equation*}
        \frac{\partial C}{\partial \beta} \;=\; \log K + \frac{\beta}{K} \cdot \frac{\partial K}{\partial \beta}, \qquad \frac{\partial K}{\partial \beta} \;=\; -\frac{1}{\beta^2}\sum_i s_i \exp\!\left(\frac{s_i}{\beta}\right).
    \end{equation*}
    Combining and recognizing $\pi_i(\bs; \beta) = \exp(s_i/\beta)/K$ yields $\partial C/\partial \beta = \log K - \sum_i \pi_i (s_i / \beta)$. Substituting $s_i/\beta = \log \pi_i + \log K$ and using $\sum_i \pi_i = 1$ simplifies this to $\partial C/\partial \beta = H(\pi(\bs; \beta))$. Equation~\eqref{eq:dxi_dbeta} follows immediately.
\end{proof}

We can now state the two algorithms. The trade execution algorithm accepts each incoming order $(\bz_t, E_t)$ at time $t$,  appends $E_t$ to the unverified evidence queue $Q$, and charges trader $t$ the worst-case share execution cost over the possible range of cumulative quality the addition of this evidence could lead to. Let $R$ denote the verified cumulative quality computed so far, and let $r_Q$ denote an upper bound on the total quality contribution of evidence currently in $Q$.\footnote{In practice, $r_Q$ can be set by the platform as a sum of per-evidence quality ceilings — e.g.\ by capping the divergence in equation~\eqref{eq:r_exogenous} at some platform-chosen value.} When trader $t$'s trade raises market entropy -- so $\partial \xi / \partial \beta > 0$ by Lemma~\ref{lem:beta_grid_opt} -- the trader prefers \emph{low} $\beta$. So the worst case for them is the \emph{highest} feasible $\beta$, achieved at $R^{\mathrm{ex}}_t = R$ (i.e. assume the queue contributes nothing). When the trade lowers market entropy, the worst case is instead $R^{\mathrm{ex}}_t = R + r_Q$ (assume the queue contributes its maximum). A market snapshot at the time of execution is stored for use by the verification adjustment routine.

\begin{algorithm}[H]
\caption{Trade Execution}\label{alg:trade_execution}
\KwIn{Verified state so far $(\bs, R)$; queue $Q$ with quality upper bound $r_Q$; arriving order stream}

\texttt{snapshots} $\gets \{\}$ \tcp*[r]{trader index $\to$ execution-time state}

\While{the order stream is not empty}{
    pull earliest order $t : (\bz_t, E_t)$ \\
    \uIf{$H\!\left(\pi(\bs + \bz_t; \beta(R))\right) - H\!\left(\pi(\bs; \beta(R))\right) \geq 0$}{
        $R^{\mathrm{ex}}_t \gets R$ \tcp*[r]{trade raises entropy; pessimistic: highest $\beta$}
    }
    \Else{
        $R^{\mathrm{ex}}_t \gets R + r_Q$ \tcp*[r]{trade lowers entropy; pessimistic: lowest $\beta$}
    }
    \texttt{snapshots}$[t] \gets \big(\bs, \bz_t, R^{\mathrm{ex}}_t, \text{copy of } Q\big)$ \\
    append $(t, E_t)$ to $Q$ \\
    charge trader $t$: $\xi^{\mathrm{ex}}_t := C\!\left(\bs + \bz_t; \beta(R^{\mathrm{ex}}_t)\right) - C\!\left(\bs; \beta(R^{\mathrm{ex}}_t)\right)$ \\
    update $\bs \gets \bs + \bz_t$
}
\end{algorithm}

The verification adjustment algorithm pulls evidence from $Q$ in arrival order, verifies it to compute $r(E_{t-1}^{total} \cup E_t)$, and refunds trader $t$ the difference between $\xi^{\mathrm{ex}}_t$ and the synchronous cost. This refund decomposes naturally into two non-negative components: a \emph{transaction refund} reflecting that $R^{\mathrm{ex}}_t$ executed the shares at the trader's worst-case $\beta$, and an \emph{evidence refund} reflecting the quality contribution $r(E_t)$ of the trader's own evidence.

\begin{algorithm}[H]
\caption{Verification Adjustment}\label{alg:verification_adjustment}
\KwIn{Queue $Q$ from Algorithm~\ref{alg:trade_execution}; snapshots dictionary \texttt{snapshots}}

\texttt{verified} $\gets \{\}$ \tcp*[r]{trader index $\to$ $r(E_t)$}

\While{$Q$ is not empty}{
    pop earliest $(t, E_t)$ from $Q$; verify and set \texttt{verified}$[t] \gets r(E_t)$ \\
    let $\big(\bs^{\mathrm{ex}}, \bz_t, R^{\mathrm{ex}}_t, Q_t\big) \gets$ \texttt{snapshots}$[t]$ \\
    $R^{\mathrm{true}}_t \gets \sum_{(t', E_{t'}) \in Q_t} \texttt{verified}[t']$ \tcp*[r]{queue entries preceding $t$ are now verified}
    \texttt{trans\_refund} $\gets \big[C(\bs^{\mathrm{ex}} + \bz_t; \beta(R^{\mathrm{ex}}_t)) - C(\bs^{\mathrm{ex}}; \beta(R^{\mathrm{ex}}_t))\big] - \big[C(\bs^{\mathrm{ex}} + \bz_t; \beta(R^{\mathrm{true}}_t)) - C(\bs^{\mathrm{ex}}; \beta(R^{\mathrm{true}}_t))\big]$ \\
    \texttt{evidence\_refund} $\gets C(\bs^{\mathrm{ex}} + \bz_t; \beta(R^{\mathrm{true}}_t)) - C(\bs^{\mathrm{ex}} + \bz_t; \beta(R^{\mathrm{true}}_t + r(E_t)))$ \\
    transfer \texttt{trans\_refund} $+$ \texttt{evidence\_refund} to trader $t$
}
\end{algorithm}

\begin{theorem}\label{thm:async_correctness}
    Under Algorithms~\ref{alg:trade_execution} and \ref{alg:verification_adjustment}, the final net payment from each trader $t$ matches the synchronous sequential execution cost. Each trader is charged once at execution; all subsequent transfers are non-negative refunds. Furthermore, the platform can quote each trader a deterministic lower bound on their total refund at the time of execution.
\end{theorem}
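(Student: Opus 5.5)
We prove the three claims in order. All three follow from telescoping the two refunds produced by Algorithm~\ref{alg:verification_adjustment} and from the monotonicity in Lemma~\ref{lem:beta_grid_opt}.

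\textbf{Step 1: the synchronous cost.} First we fix what the synchronous execution cost is. The evidence queue is processed in arrival order. Hence when trader $t$'s entry is popped, every earlier entry in the snapshot $Q_t$ has already been verified. Therefore $R^{\mathrm{true}}_t$ equals the cumulative quality $R_{t-1}$ that trader $t$ would have faced under sequential execution, assuming the base verified $R$ is included in the sum. Also, $\bs^{\mathrm{ex}}$ equals the synchronous share state $\bs_{t-1}$, since share updates in Algorithm~\ref{alg:trade_execution} happen in arrival order. So the synchronous cost of order $(\bz_t,E_t)$ is $C(\bs^{\mathrm{ex}}+\bz_t;\beta(R^{\mathrm{true}}_t+r(E_t)))-C(\bs^{\mathrm{ex}};\beta(R^{\mathrm{true}}_t))$. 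Here we read $r(E_t)$ as the marginal contribution $r(E^{total}_{t-1}\cup E_t)-r(E^{total}_{t-1})$.

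\textbf{Step 2: equivalence.} The net payment is $\xi^{\mathrm{ex}}_t-\texttt{trans\_refund}-\texttt{evidence\_refund}$. Substituting the definitions, the term $C(\bs^{\mathrm{ex}}+\bz_t;\beta(R^{\mathrm{ex}}_t))-C(\bs^{\mathrm{ex}};\beta(R^{\mathrm{ex}}_t))$ cancels against $\xi^{\mathrm{ex}}_t$. The term $C(\bs^{\mathrm{ex}}+\bz_t;\beta(R^{\mathrm{true}}_t))$ cancels between the two refunds. What remains is exactly the synchronous cost from Step~1, which gives the first claim.

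\textbf{Step 3: refunds are non-negative.} The evidence refund is $C(\cdot;\beta(R^{\mathrm{true}}_t))-C(\cdot;\beta(R^{\mathrm{true}}_t+r(E_t)))$ at fixed shares. Lemma~\ref{lem:beta_grid_opt} shows $\partial C/\partial\beta=H\ge 0$, which is equivalent to Theorem~\ref{thrm:marginal_evidence_cost}. Since $\beta$ is decreasing and $r(E_t)\ge 0$, this refund is $\ge 0$.

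For the transaction refund we use $R^{\mathrm{true}}_t\in[R,R+r_Q]$, which holds because $r_Q$ upper-bounds the queue's quality. So $\beta(R^{\mathrm{true}}_t)$ lies between $\beta(R+r_Q)$ and $\beta(R)$. By Lemma~\ref{lem:beta_grid_opt}, $\xi$ is monotone in $\beta$ with the sign of the entropy change, and Algorithm~\ref{alg:trade_execution} picks the endpoint that maximizes $\xi$. Hence $\xi^{\mathrm{ex}}_t\ge\xi(\bz_t;\bs^{\mathrm{ex}},\beta(R^{\mathrm{true}}_t))$, and the transaction refund is non-negative.

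This is the main obstacle. The algorithm tests the sign of the entropy change only at $\beta(R)$, while Lemma~\ref{lem:beta_grid_opt} needs the sign of $\partial\xi/\partial\beta$ on the whole interval. I would first try to show the sign of $H(\pi(\bs+\bz;\beta))-H(\pi(\bs;\beta))$ is constant in $\beta$ over the range. If that fails, I would replace the endpoint rule by $\max\{\xi(\beta(R)),\xi(\beta(R+r_Q))\}$, or by a maximum over a grid with a Lipschitz correction, as the lemma's label suggests. Either replacement still yields an upper bound by the mean value theorem, and Step~2 is unaffected because $\xi^{\mathrm{ex}}_t$ cancels whatever value is charged. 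Since the only positive transfer from trader to platform is the execution charge, the at-most-one-charge property follows.

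\textbf{Step 4: quoted lower bound.} At execution time the platform quotes $\xi^{\mathrm{ex}}_t-\max_{R'\in[R,R+r_Q]}\xi(\bz_t;\bs,\beta(R'))$ for the transaction refund, which is zero under the endpoint choice. For the evidence refund it quotes $\min_{R'\in[R,R+r_Q]}[C(\bs+\bz_t;\beta(R'))-C(\bs+\bz_t;\beta(R'+\underline r))]$, where $\underline r\ge 0$ is any known lower bound on the trader's own evidence quality; $\underline r=0$ is always valid. These quantities are deterministic given the snapshot, and the verified refunds dominate them by Step~3's monotonicity, which proves the final claim.
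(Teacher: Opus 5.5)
Your Steps 1 and 2 and the evidence-refund half of Step 3 follow the paper's argument. You use the same add-and-subtract identity $\xi^{\mathrm{sync}}_t=\xi^{\mathrm{ex}}_t-\texttt{trans\_refund}-\texttt{evidence\_refund}$ and the same fact $\partial C/\partial R=\beta'(R)\,H(\pi)\le 0$. Your caveats are points the pseudo-code glosses over: $R^{\mathrm{true}}_t$ must include the base verified quality, and $r(E_t)$ must mean the marginal contribution. Your quote $\min_{R'}[C(\bs+\bz_t;\beta(R'))-C(\bs+\bz_t;\beta(R'+\underline r))]$ is a cleaner variant of the paper's integral bound, since it does not need the ceiling $r_t^{\max}$.

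The gap is the transaction refund. You correctly flag it, and the paper's own proof simply asserts it (``Lemma together with the pessimistic choice of $R^{\mathrm{ex}}_t$ implies \ldots'').

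\textbf{Plan A works only for $n=2$.} There $H(\pi(\bs;\beta))$ is a strictly decreasing function of $|s_1-s_2|/\beta$. So the sign of the entropy change equals the sign of $|s_1-s_2|-|s_1+z_1-s_2-z_2|$ for every $\beta$. For $n\ge 3$ sign constancy is false. Take $\bs=(L,L,0)$ and $\bz=(1,0,L)$ with $L$ large. Then $\xi(\beta)\approx\beta\log\bigl(\tfrac{e^{1/\beta}+2}{2}\bigr)$. The entropy change is negative for $\beta<\beta^*\approx 0.52$ and positive above it, and $\xi(0.1)\approx 0.93>\xi(0.53)\approx 0.77$. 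Now suppose $\beta(R)=0.53$ and $\beta(R+r_Q)=0.1$. Algorithm~\ref{alg:trade_execution} sees a positive sign at $\beta(R)$ and charges at $\beta(R)$. If the queue then verifies at full quality, the transaction refund is about $-0.16$. With $E_t=\emptyset$ the evidence refund is zero, so the trader must be charged a second time. Hence the non-negativity and single-charge claims cannot be proved for the algorithm as written once $n\ge 3$.

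\textbf{Your first fallback also fails.} Take $\bs=(1,0,0)$ and $\bz=(L-1,L,0)$. Then $\xi(\beta)\approx L-\beta\log\bigl(\tfrac{e^{1/\beta}+2}{2}\bigr)$, which on $[0.1,5]$ has an interior maximum near $\beta^*$ exceeding both endpoint values. The mean value theorem gives no endpoint bound once $\partial\xi/\partial\beta$ changes sign.

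\textbf{What does work} is a genuine maximization over the interval. For example, evaluate $\xi$ on a $\beta$-grid of mesh $h$ and add the correction $\tfrac{h}{2}\log n$. This is valid because $|\partial\xi/\partial\beta|=|H(\pi(\bs+\bz;\beta))-H(\pi(\bs;\beta))|\le\log n$. It modifies Algorithm~\ref{alg:trade_execution}, so you would be proving a corrected theorem, or the stated one restricted to binary markets. Your Step 2 survives unchanged, because whatever $\xi^{\mathrm{ex}}_t$ is charged cancels. Your Step 4 quote for the transaction refund becomes non-negative rather than exactly zero.
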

\begin{proof}
    Fix trader $t$ with snapshot $(\bs^{\mathrm{ex}}, \bz_t, R^{\mathrm{ex}}_t, Q_t)$, true cumulative quality $R^{\mathrm{true}}_t$ at the time of their arrival, and own evidence quality $r_t := r(E_t)$. Their synchronous cost is
    \begin{equation*}
        \xi^{\mathrm{sync}}_t \;:=\; C\!\left(\bs^{\mathrm{ex}} + \bz_t; \beta(R^{\mathrm{true}}_t + r_t)\right) - C\!\left(\bs^{\mathrm{ex}}; \beta(R^{\mathrm{true}}_t)\right).
    \end{equation*}
    Adding and subtracting $C(\bs^{\mathrm{ex}} + \bz_t; \beta(R^{\mathrm{true}}_t)) - C(\bs^{\mathrm{ex}}; \beta(R^{\mathrm{true}}_t))$, we obtain the identity
    \begin{equation}\label{eq:sync_decomp}
        \xi^{\mathrm{sync}}_t \;=\; \xi^{\mathrm{ex}}_t \;-\; \texttt{trans\_refund} \;-\; \texttt{evidence\_refund},
    \end{equation}
    where the two refund terms are as defined in Algorithm~\ref{alg:verification_adjustment}. Hence the trader's net payment after refund equals $\xi^{\mathrm{sync}}_t$.

    Both refund components are non-negative. For \texttt{trans\_refund}, Lemma~\ref{lem:beta_grid_opt} together with the pessimistic choice of $R^{\mathrm{ex}}_t$ implies that $\xi(\bs^{\mathrm{ex}}, \bz_t, \beta(R^{\mathrm{ex}}_t)) \geq \xi(\bs^{\mathrm{ex}}, \bz_t, \beta(\rho))$ for every $\rho$ in the plausible range $[R, R + r_Q]$, and in particular for $\rho = R^{\mathrm{true}}_t$. For \texttt{evidence\_refund}, Theorem~\ref{thrm:marginal_evidence_cost} establishes $\partial C / \partial R = (\partial \beta / \partial R) H(\pi) \leq 0$, so $C$ is non-increasing in $R$ and the difference is non-negative. The trader thus faces a single charge at execution followed only by transfers in their favor.

    Finally, the platform can lower bound the refund at execution time. Since $\texttt{trans\_refund} \geq 0$, the entire refund is bounded below by $\texttt{evidence\_refund}$, which in turn can be expressed as the line integral $\int_{R^{\mathrm{true}}_t}^{R^{\mathrm{true}}_t + r_t} |\partial \beta / \partial R| \cdot H(\pi(\bs^{\mathrm{ex}} + \bz_t; \beta(R'))) dR'$. Bounding the integrand below over the plausible range $R' \in [R, R + r_Q + r_t^{\max}]$ --- with $r_t^{\max}$ a platform-set ceiling on per-order quality contribution --- and assuming a platform-set minimum quality $r_t^{\min}$ for evidence to be deemed valid, gives the executable lower bound
    \begin{equation*}
        \texttt{evidence\_refund} \;\geq\; r_t^{\min} \cdot \min_{R' \in [R,\, R + r_Q + r_t^{\max}]} \left\{ \bigg|\frac{\partial \beta}{\partial R}\bigg|_{R'} \cdot H\!\left(\pi(\bs^{\mathrm{ex}} + \bz_t; \beta(R'))\right) \right\},
    \end{equation*}
    which is computable from quantities known at execution time and given to the trader as a conditional guarantee (i.e.\ contingent on their evidence passing verification with quality at least $r_t^{\min}$).
\end{proof}

\section{Discussion}\label{sec:discussion}
This work introduced evidence markets, a generalization of prediction markets in which traders can convey beliefs alongside evidence, and which resolves either exogenously through a realized event or endogenously through the submitted evidence itself. This addresses the two limitations we highlighted: the opacity of the reasoning behind a market price, and the inapplicability of prediction markets to questions that admit no time-bound external resolution. Concretely, the mechanism extends the logarithmic market scoring rule with a liquidity parameter that decreases as cumulative evidence quality grows; under it, reporting both one's belief truthfully and one's evidence wholly is strictly DSIC when the market resolves exogenously and $\varepsilon$-DSIC when it resolves endogenously. The same mechanism can equivalently be cast as an automated market maker. In both views, the trader's payoff can be cleanly split to capture the value of evidence so that even a risk-averse trader who takes no directional position earns a non-negative reward simply for contributing good evidence. Although LLM evaluation is our running example, the design fits any setting where the question is comparative and its answer must be assembled from a body of evidence rather than read off from nature; examples include the replicability of a scientific finding, the effectiveness of a policy, or the relative quality of competing artifacts.

Several questions remain. The most pressing being a deeper treatment of verification; our LLM-gated mechanism with staked disputes is a constructive starting point, but hardening it to become production-ready for an adversarial, real-money market is a substantial design and engineering problem. A second concerns the acquisition cost of evidence. We treat evidence as free to acquire, whereas producing a genuinely informative piece is often effortful, and modeling that cost and the resulting tension between reporting one's belief and actively gathering evidence for it would bring the framework closer to the reality we have in mind. A third concerns the trading interface. Platforms such as Kalshi and Polymarket run on central limit order books rather than cost-function based AMMs; how best to implement evidence markets in that setting with similar guarantees remains open. 

\section*{Acknowledgments}
We thank Sid Srinivasan, Bo Waggoner, and Raf Frongillo for their insightful comments and the many helpful discussions we have had.

\bibliographystyle{plainnat}
\bibliography{bibliography}


\end{document}